\documentclass[11pt]{article}
\usepackage[margin=1in]{geometry}
\usepackage{amsmath, amssymb, amsthm}
\usepackage{booktabs}
\usepackage{array}
\usepackage{hyperref}
\usepackage{enumitem}
\usepackage{subcaption}
\usepackage{graphicx}

\newtheorem{assumption}{Assumption}

\newtheorem{corollary}{Corollary}
\newtheorem{definition}{Definition}

\usepackage{thmtools}
\usepackage{thm-restate}

\usepackage{tikz}
\usetikzlibrary{positioning,calc,backgrounds,arrows.meta,matrix}
\usepackage{xcolor}

\definecolor{o1Fill}{HTML}{DDF1E7}\definecolor{o1Edge}{HTML}{2E8B6B}
\definecolor{o2Fill}{HTML}{FBEFD4}\definecolor{o2Edge}{HTML}{B8860B}
\definecolor{o3Fill}{HTML}{FBE6E0}\definecolor{o3Edge}{HTML}{C2603F}
\definecolor{lbl}{HTML}{6B7280}

\definecolor{parteal}{HTML}{1D9E75}\definecolor{partealbg}{HTML}{E1F5EE}\definecolor{partealtx}{HTML}{0F6E56}
\definecolor{parcoral}{HTML}{D85A30}\definecolor{parcoralbg}{HTML}{FAECE7}\definecolor{parcoraltx}{HTML}{712B13}
\definecolor{pargray}{HTML}{8A8880}\definecolor{pargraybg}{HTML}{F1EFE8}\definecolor{pargraytx}{HTML}{444441}

\title{APMM: Automated Parlay Market Maker}

\author{
  Niusha Moshrefi\\
  Princeton University, Princeton, NJ, USA\\
  \texttt{niusha@princeton.edu}
  \and
  Ranvir Rana\\
  Kaleidoscope Blockchain, USA\\
  \texttt{ranvirranaiitb@gmail.com}
  \and
  Pramod Viswanath\\
  Princeton University, Princeton, NJ, USA\\
  \texttt{pramodv@princeton.edu}
}

\date{}

\begin{document}
\maketitle

\begin{abstract}
Parlays - joint contracts on the simultaneous resolution of several events - are among the most heavily traded products in betting markets, but prediction markets have struggled to offer them natively. In this paper, we offer the full combinatorial family of parlays on top of $M$ binary events, as liquid markets, bounding the market maker loss for subsidizing the markets to $O(M^2)$. Any single parlay attracts few traders, so each is an inherently thin market, and the logarithmic market scoring rule (LMSR) is the natural mechanism for thin markets. But running a separate LMSR for each of the exponentially many parlays forces the market maker to pay for the same information many times over. We show that a market maker which automatically propagates information across related parlays avoids this redundancy.

We make three contributions. First, we introduce the automated parlay market maker (APMM), which uses a \emph{hierarchical parameterization}: the state of each low-leg parlay is shared into every higher-leg parlay that contains it, so after pricing one, the new information propagates. Second, we show that when informed trading is concentrated in parlays with few legs, the market maker's worst-case loss is bounded by $O(M^2)$, and the expected loss is bounded by $O(M)$. Third, we validate these bounds in simulation and on historical Kalshi order flow, confirming that real belief updates are dominated by low-leg changes and that APMM's advantage persists under real trading patterns.
\end{abstract}
 
\section{Introduction}
\label{sec:intro}

Parlays - joint contracts whose payoff is contingent on the simultaneous
resolution of several underlying events - are among the most heavily traded
products in modern betting markets. In traditional sportsbooks they account for
a large share of activity: roughly $27\%$ of all money wagered across Illinois,
New Jersey, and Colorado through October 2024~\cite{wsj2025parlay}. The same
appetite has migrated to prediction markets. After Kalshi introduced bundled
multi-event positions in late September 2025, parlay volume rose from under
$3\%$ to roughly $22\%$ of total exchange volume in roughly six
months~\cite{sportico2026kalshi}.

Prediction markets, however, have struggled to support parlays natively.
Because federal regulation bars these exchanges from taking a side of any trade, both Kalshi and Polymarket clear parlays through a request-for-quote model, in which third-party market makers supply the opposing position~\cite{sportico2026kalshi,covers2026combos}; neither operates a native information market over the combinatorial family of parlays. The difficulty is structural; any specific combination of legs attracts few traders, so each
parlay is an inherently thin market~\cite{ballislife2026parlay}.
 
Thin markets are exactly the regime for which the logarithmic market scoring
rule (LMSR) was designed, which makes it the natural mechanism for pricing
parlays~\cite{hanson2003combinatorial,hanson2007lmsr}. The obvious construction
runs an independent LMSR for each parlay, pricing every combination in
isolation. Each such market inherits a constant standard per-market loss
bound~\cite{hanson2003combinatorial}, but having combinatorial family of parlays the
worst-case loss grows large. 

\begin{figure}[t]
  \centering
  \begin{subfigure}[t]{0.55\linewidth}
    \centering
    \includegraphics[width=\linewidth]{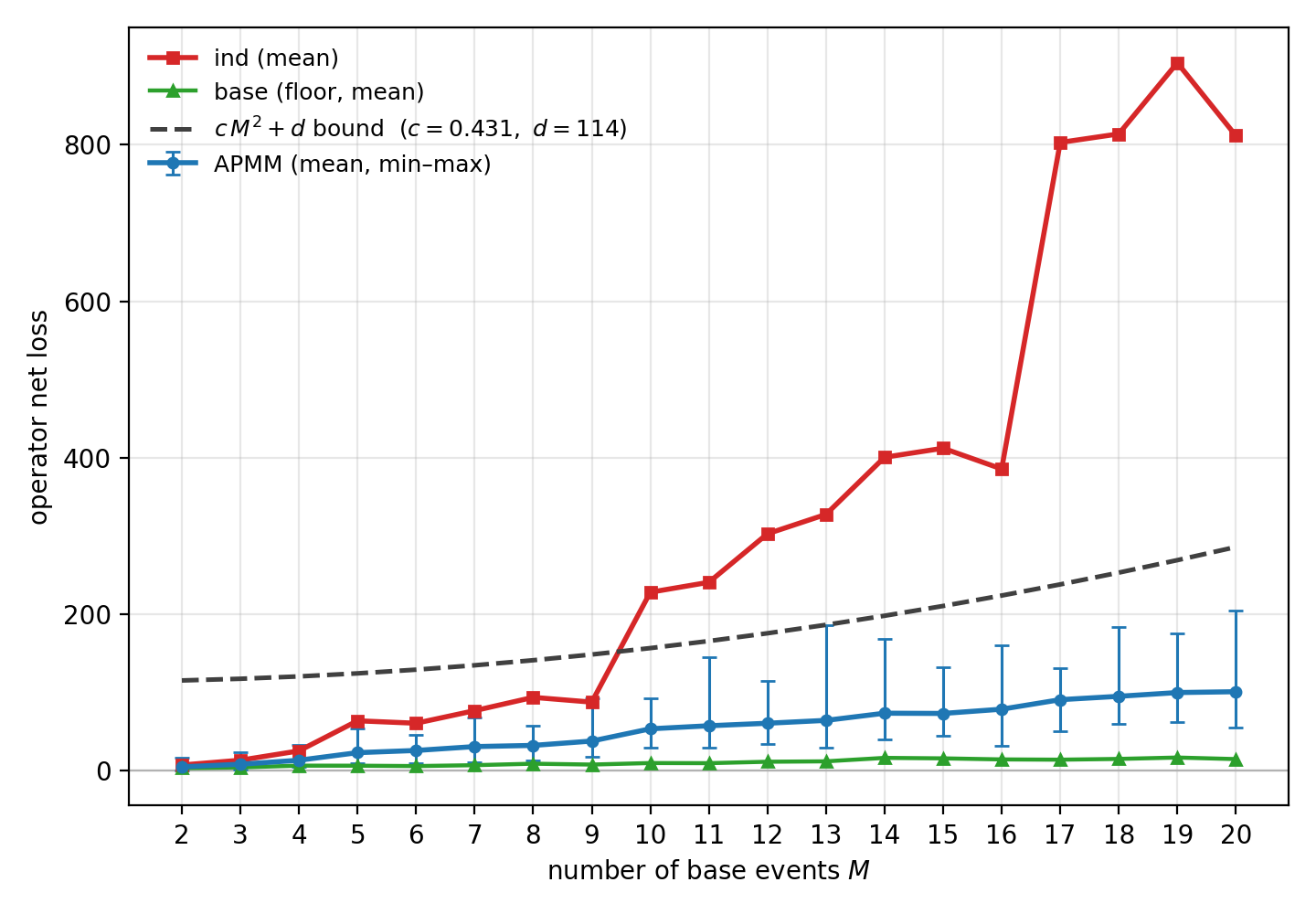}
    \caption{Expected market maker loss vs.\ $M$: APMM's (our work) loss is bounded $O(M^2)$, and is linear in expectation, while the baseline
    gap grows.}
    \label{fig:loss}
  \end{subfigure}
  \hfill
  \begin{subfigure}[t]{0.43\linewidth}
    \centering
    \resizebox{\linewidth}{!}{%
\begin{tikzpicture}[
  >={Latex[length=2.2mm]},
  mnode/.style={rounded corners=3pt, draw, line width=0.6pt,
                minimum width=1.05cm, minimum height=0.72cm,
                font=\small\bfseries, inner sep=2pt},
  traded/.style={mnode, fill=parcoralbg, draw=parcoral, text=parcoraltx},
  updated/.style={mnode, fill=partealbg, draw=parteal, text=partealtx},
  unchanged/.style={mnode, fill=pargraybg, draw=pargray, text=pargraytx},
  lat/.style={draw=pargray, line width=0.6pt},
  prop/.style={draw=parteal, line width=1.1pt, ->},
  olbl/.style={font=\footnotesize, text=pargraytx, anchor=east}
]
  \node[olbl] at (-0.8,0) {order 1};
  \node[olbl] at (-0.8,2) {order 2};
  \node[olbl] at (-0.8,4) {order 3};

  \node[traded]    (A)   at (0,0) {A};
  \node[unchanged] (B)   at (3,0) {B};
  \node[unchanged] (C)   at (6,0) {C};
  \node[updated]   (AB)  at (1,2) {AB};
  \node[updated]   (AC)  at (3,2) {AC};
  \node[unchanged] (BC)  at (5,2) {BC};
  \node[updated]   (ABC) at (3,4) {ABC};

  \draw[lat] (B) -- (AB);
  \draw[lat] (B) -- (BC);
  \draw[lat] (C) -- (AC);
  \draw[lat] (C) -- (BC);
  \draw[lat] (BC) -- (ABC);

  \draw[prop] (A) -- (AB);
  \draw[prop] (A) -- (AC);
  \draw[prop] (AB) -- (ABC);
  \draw[prop] (AC) -- (ABC);

  \draw[draw=parcoral, line width=1.1pt, ->] (0,-1.05) -- (A.south);
  \node[font=\footnotesize, anchor=west, text=parcoraltx] at (0.2,-0.78)
    {incoming trade};

  \matrix[anchor=north, matrix of nodes, column sep=10pt,
          ampersand replacement=\&,
          nodes={anchor=west, font=\footnotesize, inner sep=2pt},
          every node/.style={text=pargraytx}] at (3,-1.7) {
    |[fill=parcoralbg,draw=parcoral,minimum size=8pt,label=right:{traded market}]| {} \&
    |[fill=partealbg,draw=parteal,minimum size=8pt,label=right:{updated}]| {} \&
    |[fill=pargraybg,draw=pargray,minimum size=8pt,label=right:{unchanged}]| {} \\
  };
\end{tikzpicture}}
    \caption{Hierarchical parameterization ($M=3$). A trade on $A$ propagates up to
    every parlay containing it.}
    \label{fig:mechanism}
  \end{subfigure}
\end{figure}

To improve on this, we ask where the loss originates. Under LMSR, the operator
loses to informed order flow---trades that move prices toward information not
yet reflected in the market state. In the parlay setting, a recoverable
component of this flow is \emph{shared-event information}. A trader updates one
parlay market while leaving correlated markets stale. Parlays make this
structure explicit, since a lower-order parlay and any higher-order parlay
containing it are mechanically correlated. A trade that moves one market
therefore reveals information the others have not yet incorporated;
arbitrageurs trade against the stale markets and capture the resulting
discrepancy, which the operator finances.

Crucially, the same information reaches the operator at the moment of the trade.
Rather than allowing correlated markets to go stale, the operator can propagate
the update itself and thereby internalize this component of the loss. Instead of
treating each parlay's LMSR as independent, we share the state of
lower-order parlays into the higher-order parlays that contain them. Figure~\ref{fig:mechanism} illustrates this propagation on three base events, showing the fundamentals of our design. The
resulting \emph{hierarchical parameterization} preserves the structure and main properties of the baseline while automatically propagating every trade's 
information to many markets that share the affected events, eliminating part of the 
shared-information loss. Figure~\ref{fig:loss} previews our main contribution, the loss in APMM staying
linear, when the number of base markets $M$ grows;
running the same synthetic low-order trade flow on all designs and measuring the loss repeated on 40 seeds.

\paragraph{Our contributions.}
\begin{itemize}
  \item \textbf{A new automated market maker for parlays.} We propose a market
  making design tailored to the combinatorial structure of parlays. Rather than
  pricing each parlay in isolation, the mechanism shares the state
  of every lower-order parlay into the higher-order parlays containing it, so a
  single trade's information propagates to all markets that share the same legs with it, preventing prices from going stale.
  \item \textbf{Linear expected and quadratic worst-case loss bounds under low-order dominance.} We
  show that when belief updates are dominated by low-order changes---higher-order terms
  sparse and bounded in magnitude---the market maker's expected loss is linear in $M$, matching the order for cost of running the $M$ base markets alone. We further bound the worst case loss
  of running APMM for the market maker by $O(M^2)$.
  \item \textbf{Validation in simulation and on Kalshi.} We confirm in
  simulation that realized loss scales linearly with $M$ and stays well below the
  quadratic worst-case bound in all instances under the low-order
  belief model. We further show that low-order changes do
  dominate belief updates on real world prediction market data. Finally, replaying historical Kalshi order flow
  through, we show that the APMM design's gains persist
  under real trade patterns.
\end{itemize}

\paragraph{Related work.}
\label{rel-work}
LMSR is the natural starting point for thin prediction markets because it provides continuous liquidity and bounded worst-case loss, but its direct combinatorial implementation is infeasible; ~\cite{chen2008complexity} show that maintaining combinatorial LMSR markets is \#P-hard.
A broad literature studies alternative bounded-loss market makers and variants of LMSR that improve the liquidity–loss tradeoff. Utility-based market makers characterize when bounded loss is possible and expose fundamental tradeoffs between worst-case loss and liquidity~\cite{chen2012utility}; liquidity-sensitive market makers can reduce worst-case loss and even approach zero loss for small initial liquidity, but only by relaxing no-arbitrage/probability coherence~\cite{othman2010liquidity}; convex-optimization frameworks design efficient market makers for selected security spaces rather than full outcome spaces~\cite{abernethy2013efficient}; and specialized designs such as interval-security markets or constant-log-utility market makers obtain stronger bounds in restricted domains~\cite{dudik2021logtime,papireddygari2025clum}. Recent parlay-specific work such as ParlayMarket~\cite{rana2026parlaymarket} learns pairwise correlations from order flow into a shared parametric belief and bounds expected loss over trading rounds. APMM takes a complementary approach: rather than learning a correlation model, it shares LMSR share-blocks across the parlay hierarchy so that low-order trades coherently price the higher-order contracts that contain them, and it targets worst-case rather than expected exposure. The two occupy different points in the design space - learned-belief versus structural sharing, expected versus worst-case loss 

\section{Model and Preliminaries}
\label{sec:model}

\subsection{LMSR, cost, and the loss--entropy identity}
\label{sec:lmsr}

Fix a market with finite outcome set $\Omega$, $|\Omega|=n$, run by a
\emph{logarithmic market scoring rule} (LMSR) with liquidity $b>0$
\cite{hanson2003combinatorial,hanson2007lmsr}. The state is a share vector
$q\in\mathbb{R}^{\Omega}$, with \emph{cost function} and \emph{prices}
\begin{equation}
  C(q)=b\log\!\sum_{\omega\in\Omega}\!e^{q_\omega/b},
  \qquad
  \pi_\omega(q)=\frac{e^{q_\omega/b}}{\sum_{\omega'}e^{q_{\omega'}/b}}
  =\partial_{q_\omega}C(q).
  \label{eq:cost-price}
\end{equation}
A trader moving the state $q\to q'$ pays $C(q')-C(q)$ and receives the share
differential $q'-q$, redeemable for $\$1$ each on the realized outcome. Cost is a
potential, so revenue is path independent. Any trade sequence from $q_0$ to $q$
nets the operator $C(q)-C(q_0)$.

The book opens at an arbitrary state $q_0$ quoting prices $\pi^0:=\pi(q_0)$. If
outcome $\omega$ realizes, the operator's loss is payout minus revenue,
\(
  L(q_0\!\to q;\omega)
  =\big(q_\omega-q_{0,\omega}\big)-\big(C(q)-C(q_0)\big)
  =b\log\frac{\pi_\omega(q)}{\pi^0_\omega},
  \label{eq:loss}
\)
where the last step uses \eqref{eq:cost-price}. The
operator's loss aggregated over outcomes weighted by the closing prices is(~\cite{hanson2003combinatorial})
\begin{equation}
  \sum_{\omega}\pi_\omega(q)\,L(q_0\!\to q;\omega)
  \;=\;b\sum_{\omega}\pi_\omega(q)\log\frac{\pi_\omega(q)}{\pi^0_\omega}
  \;=\;b\,\mathrm{KL}\!\big(\pi(q)\,\big\|\,\pi^0\big).
  \label{eq:loss-kl}
\end{equation}

\subsection{Parlays as a combinatorial hierarchy}
\label{sec:hierarchy}

Let $[M]=\{1,\dots,M\}$ index $M$ binary base events. A \emph{parlay LMSR} is a
nonempty leg-set $S\subseteq[M]$ of \emph{order} $|S|$ with outcome set
$\Omega_S=\{0,1\}^S$, $|\Omega_S|=2^{|S|}$, quoting a bet that the legs in $S$ land
on a pattern $\boldsymbol\omega_S\in\Omega_S$.

The parlay structure is thus a union of parlay LMSRs, one per leg-set. Covering every
$k$-leg combination of base markets takes $\binom{M}{k}$ such markets, so on top of the $M$ base
($k=1$) markets the full hierarchy comprises $\sum_{k=1}^{M}\binom{M}{k}=2^M-1$
LMSR markets, indexed by the non-empty leg-sets $S\subseteq[M]$.

Run independently, these $2^M-1$ markets can quote mutually contradictory prices,
since the same base event appears in many of them.

\subsection{Latent distribution}
\label{sec:latent}
We posit a latent ground-truth distribution $P^\star\in\Delta(\{0,1\}^M)$ over the
atoms, capturing both the marginal probabilities of the base events and their
dependence. For a non-empty leg-set $S\subseteq[M]$, its \emph{true marginal} is the
distribution $\tau^\star_S\in\Delta(\Omega_S)$ induced on the outcomes of $S$ by
marginalizing $P^\star$:
\(
  \tau^\star_S(\boldsymbol\omega_S)
  \;=\!\!\sum_{z:\,z|_S=\boldsymbol\omega_S}\!\!P^\star(z),
  \qquad \boldsymbol\omega_S\in\Omega_S .
  \label{eq:true-marginal}
\)

Any strictly positive belief $P\in\Delta(\{0,1\}^M)$ expands its log-density
uniquely in the monomial basis $\{z^S\}_{S\subseteq[M]}$, where $z^S=\prod_{i\in S}z_i$,
\begin{equation}
  \log P(z)=\sum_{S\subseteq[M]}\theta_S(P)\,z^S,
  \qquad
  \theta_S(P)=\sum_{T\subseteq S}(-1)^{|S|-|T|}\log P(\mathbf 1_T),
  \label{eq:canonical}
\end{equation}
with $\mathbf 1_T\in\{0,1\}^M$ the atom setting $z_i=1$ iff $i\in T$. We call
$\theta_S(P)$ the \emph{canonical parameter} of \emph{order} $|S|$. Parameter $\theta_S$ encodes precisely the $|S|$-way dependence that no lower-order parameter captures. Under independence every $\theta_S$ with $|S|\ge 2$ vanishes.

\subsection{Trader model}
\label{sec:trader}
We describe the model we use for the order flow and the trader information.

\begin{definition}[Partially informed]
\label{def:belief}
An arriving trader is informed about a single parlay. There is a leg-set
$S\subseteq[M]$, a base market when $|S|=1$, on which the trader reports 
$\tau^\star_S$, and holds no opinion on other markets. 
\end{definition}


\begin{definition}[$k$-order belief change]
\label{def:k-order}
Fix an order $k\in\{1,\dots,M\}$. A belief $P_{\mathrm{new}}\in\Delta(\{0,1\}^M)$ is a
\emph{k-order belief change} from the current state $P_{\mathrm{cur}}$ if
its canonical parameters agree above order $k$,
\begin{equation}
  \theta_S(P_{\mathrm{new}})=\theta_S(P_{\mathrm{cur}})
  \qquad\text{for all }S\subseteq[M]\text{ with }|S|>k,
  \label{eq:k-order}
\end{equation}
so the two beliefs differ only in dependence of order $k$ or lower.
\end{definition}

\paragraph{Informed flow.} The flow as a whole carries only the information its traders bring. Write
$\widehat P\in\Delta(\{0,1\}^M)$ for the \emph{aggregate informed state}, the coherent joint
reached after every trader has refined the state on its informed events. We assume the flow is \emph{fully informative}. The aggregate of all traders' partial
information reconstructs the latent truth, meaning
$\widehat P=P^\star$ and each book closes at its true marginal $\tau^\star_S$ by resolution. Moreover,
the full information is a $k$-order belief change from the initial state for some $k<<M$.

\section{The Automated Parlay Market Maker (APMM)}
\label{sec:design}

\subsection{The baseline}
\label{sec:baseline}

The direct way to trade parlays is to run one independent LMSR per leg-set. The
$2^M-1$ nonempty $S\subseteq[M]$ of \S\ref{sec:hierarchy}, each have a book over its
$2^{|S|}$ outcomes. Every $3^M-1$
(yes/no/omitted for each leg, excluding all legs omitted) parlay contract is replicable as a position on a combination of outcomes in one LMSR.
This is the \emph{independent baseline} we carry forward.

We measure a design by the operator's expected and worst-case loss under informed flow. The loss on each
book is path independent \eqref{eq:loss}, fixed by that book's opening quote $\pi^0_S$ and
its closing quote at resolution. 
At resolution each book has been driven to the marginal $\tau^\star_S$, so by the 
loss--entropy identity \eqref{eq:loss-kl} a baseline that
runs one independent LMSR per leg-set incurs the aggregate expected loss
\begin{equation}
  \label{eq:objective}
  \mathcal{L}_{\mathrm{base}}(\pi^0)\;=\;b\!\!\sum_{\varnothing\neq S\subseteq[M]}\!\!
  \mathrm{KL}\!\big(\tau^\star_S\,\big\|\,\pi^0_S\big),
\end{equation}
one relative-entropy term per book. Run independently, the books make the operator pay for the same information many
times.



\subsection{APMM's hierarchical parameterization}
\label{sec:method-reparam}

\paragraph{Stored parameters.}
APMM stores one parameter per \emph{partial assignment}. For each non-empty
$S\subseteq[M]$ and pattern $\boldsymbol\omega_S\in\Omega_S$ we store a parameter $r^{(S)}_{\boldsymbol\omega_S}\in\mathbb{R}$, writing
$q^{(\{k\})}_{\omega}:=r^{(\{k\})}_{\omega}$ for singletons. 
A trade routed to $r^{(S)}_{\boldsymbol\omega_S}$ injects information at
order $|S|$ alone. The number of these parameters is
\begin{equation}
  \label{eq:param-count}
  \bigl|\{r^{(S)}_{\boldsymbol\omega_S}\}\bigr|
  \;=\;\sum_{k=1}^{M}\binom{M}{k}2^k\;=\;3^M-1,
\end{equation}
exactly the count of possible parlay contracts. The parameters are what the operator subsidizes.

\paragraph{Contracts and per-market quotes.}
For each non-empty $S\subseteq[M]$, market $S$ ranges over its $2^{|S|}$ outcomes
$\boldsymbol\omega_S\in\Omega_S$. Its share count for an outcome is the
\emph{aggregate} of its own parameter and every consistent lower-order parameter,
\begin{equation}
  \label{eq:Q-agg}
  Q^{(S)}_{\boldsymbol\omega_S}
  \;=\;\sum_{\emptyset\neq T\subseteq S} r^{(T)}_{\boldsymbol\omega_S|_T},
\end{equation}
where $\boldsymbol\omega_S|_T$ restricts $\boldsymbol\omega_S$ to $T$. Market $S$'s
quote is then read off by the LMSR rule on these aggregates,
\begin{equation}
  \label{eq:cost-design}
  C^{(S)}(\mathbf{Q}^{(S)})
  \;=\;b\log\!\!\sum_{\boldsymbol\omega'_S\in\Omega_S}\!\!
       e^{Q^{(S)}_{\boldsymbol\omega'_S}/b},
  \qquad
  \pi^{(S)}_{\boldsymbol\omega_S}
  \;=\;\frac{e^{Q^{(S)}_{\boldsymbol\omega_S}/b}}
            {\sum_{\boldsymbol\omega'_S}e^{Q^{(S)}_{\boldsymbol\omega'_S}/b}} .
\end{equation}
Prices sum to one within each market, and a base market ($|S|=1$) collapses to the
ordinary binary LMSR. The price a contract carries is this per-market quote
$\pi^{(S)}_{\boldsymbol\omega_S}$. A contract pays out a dollar for outcome resolving to its pattern.

\begin{figure*}[t]
\centering
\begin{tikzpicture}[
  font=\sffamily,
  bx/.style={rounded corners=2pt, draw, minimum width=16mm, minimum height=7.5mm,
             inner sep=1pt, anchor=center, line width=0.5pt, font=\scriptsize},
  b1/.style={bx, fill=o1Fill, draw=o1Edge},
  b2/.style={bx, fill=o2Fill, draw=o2Edge},
  b3/.style={bx, fill=o3Fill, draw=o3Edge},
  pls/.style={font=\footnotesize, text=lbl, anchor=center},
  lhs/.style={font=\small, anchor=east},
]

\def\ca{0}    \def\cb{2.0}  \def\cc{4.0}
\def\cd{6.6}  \def\ce{8.6}  \def\cf{10.6}
\def\cg{13.2}
\def\lx{-0.95}     

\def\ya{0}
\node[lhs] at (\lx,\ya) {$Q^{(\{1\})}_{0}\;=$};
\node[b1] at (\ca,\ya) {$q^{(\{1\})}_{0}$};

\def\yb{-1.45}
\node[lhs] at (\lx,\yb) {$Q^{(\{1,2\})}_{01}\;=$};
\node[b1] at (\ca,\yb) {$q^{(\{1\})}_{0}$};
\node[pls] at (1.0,\yb) {$+$};
\node[b1] at (\cb,\yb) {$q^{(\{2\})}_{1}$};
\node[pls] at (4.3,\yb) {$+$};
\node[b2] at (\cd,\yb) {$r^{(\{1,2\})}_{01}$};

\def\yc{-2.90}
\node[lhs] at (\lx,\yc) {$Q^{(\{1,2,3\})}_{010}\;=$};
\node[b1] at (\ca,\yc) {$q^{(\{1\})}_{0}$};   \node[pls] at (1.0,\yc) {$+$};
\node[b1] at (\cb,\yc) {$q^{(\{2\})}_{1}$};   \node[pls] at (3.0,\yc) {$+$};
\node[b1] at (\cc,\yc) {$q^{(\{3\})}_{0}$};   \node[pls] at (5.3,\yc) {$+$};
\node[b2] at (\cd,\yc) {$r^{(\{1,2\})}_{01}$}; \node[pls] at (7.6,\yc) {$+$};
\node[b2] at (\ce,\yc) {$r^{(\{1,3\})}_{00}$}; \node[pls] at (9.6,\yc) {$+$};
\node[b2] at (\cf,\yc) {$r^{(\{2,3\})}_{10}$}; \node[pls] at (11.9,\yc) {$+$};
\node[b3] at (\cg,\yc) {$r^{(\{1,2,3\})}_{010}$};

\begin{scope}[on background layer]
\foreach \cx in {\ca,\cb,\cd} {
  \draw[lbl!35, dashed, line width=0.4pt] (\cx,0.5) -- (\cx,-3.4);
}
\end{scope}

\begin{scope}[yshift=-4.05cm]
  \node[b1, minimum width=5mm, minimum height=4mm, font=\tiny] (lg1) at (0,0) {};
  \node[anchor=west, text=lbl, font=\sffamily\footnotesize] at (lg1.east) {\,Order 1 (singletons)};
  \node[b2, minimum width=5mm, minimum height=4mm, font=\tiny] (lg2) at (4.1,0) {};
  \node[anchor=west, text=lbl, font=\sffamily\footnotesize] at (lg2.east) {\,Order 2 (pairs)};
  \node[b3, minimum width=5mm, minimum height=4mm, font=\tiny] (lg3) at (7.7,0) {};
  \node[anchor=west, text=lbl, font=\sffamily\footnotesize] at (lg3.east) {\,Order 3 (triple)};
\end{scope}

\end{tikzpicture}
\caption{The dashed guides mark parameters
reused across markets: a low-order parameter appears in the share count of every
higher-order market whose legs contain it.}
\label{fig:sharing}
\end{figure*}
Figure~\ref{fig:sharing} shows the aggregation \eqref{eq:Q-agg}. The share count
the triple market trades against on outcome $010$ stacks its own parameter
$r^{(\{1,2,3\})}_{010}$ on the lower-order parameters consistent with $010$, so a
low-order parameter such as $q^{(\{1\})}_{0}$ or $r^{(\{1,2\})}_{01}$ is reused by
every higher-order market whose legs contain it. 

\subsection{APMM trade mechanics}
\label{sec:trade-mechanics}
A trade begins when an arriving trader hands the operator a target marginal belief
$\tau^{\mathrm{new}}_S\in\Delta(\Omega_S)$ on a single leg-set $S$, formed from the
partial information of \S\ref{sec:trader}. The operator realizes this belief on the
trader's behalf as a family of LMSR trades, one per affected market.

\begin{definition}[Full-trade and sub-trades]
\label{def:trade}
A \emph{full-trade} is the operator's response to a trader's target belief. It
consists of a collection of \emph{sub-trades}, at most one for each market $S'\subseteq S$ whose stored
parameter the operator moves, where the sub-trade on market $S'$ updates the single
order-$|S'|$ parameter block $r^{(S')}$ and is priced by that market's cost
\eqref{eq:cost-design}.
\end{definition}

\paragraph{Routing sub-trades.}
The operator issues a sub-trade on every leg-set $S'\subseteq S$, matching each such
book's marginal to the trader's belief, and leaves all other parameters untouched. The
sub-trades are executed bottom-up, in ascending order of $|S'|$, from the base markets
up to $S$ itself. Processing books in this order, every parameter on a strictly smaller
leg-set is already fixed when book $S'$ is reached, so the operator sets the single
block $r^{(S')}$ to drive book $S'$'s quote to the target marginal. Writing the
order-$|S'|$ target in share units,
\begin{equation}
  \label{eq:target}
  g_{S'}(\boldsymbol\omega_{S'})\;:=\;b\log\tau^{\mathrm{new}}_{S'}(\boldsymbol\omega_{S'}),
\end{equation}
with $\tau^{\mathrm{new}}_{S'}$ the marginal of $\tau^{\mathrm{new}}_S$ on $S'$, the
operator sets
\begin{equation}
  \label{eq:route}
  r^{(S')}_{\boldsymbol\omega_{S'}}
  \;\longleftarrow\;
  g_{S'}(\boldsymbol\omega_{S'})\;-\!\!\sum_{\emptyset\neq T\subsetneq S'}\!\!
  r^{(T)}_{\boldsymbol\omega_{S'}|_T} ,
\end{equation}
so $r^{(S')}$ absorbs exactly the order-$|S'|$ structure of the target marginal that the
lower-order books already in place do not supply. The recursion is well posed because
its right-hand side depends only on strictly smaller leg-sets, already fixed by the
ascending order.

\paragraph{Buy-only realization.}
Each sub-trade is issued as a buy. The target \eqref{eq:target} fixes $r^{(S)}$ through
\eqref{eq:route} only up to an additive constant on book $S$, since shifting all of
$r^{(S)}$ by a constant moves $Q^{(S')}$ uniformly for every $S'\supseteq S$ and changes
no price. Taking the books in ascending order of $|S|$, the operator raises this constant
until the share differential
$\Delta Q^{(S)}_{\boldsymbol\omega_S}=Q^{(S),\mathrm{new}}_{\boldsymbol\omega_S}
-Q^{(S),\mathrm{old}}_{\boldsymbol\omega_S}$ is nonnegative on every outcome, always
possible because the gauge shifts each outcome equally while the non-uniform part of the
differential is already fixed. The gauge leaves the net loss after resolution for the operator 
unchanged, since by \eqref{eq:loss} the loss depends on shares only through the price ratio
$L\propto b\log(\pi^{(S)}_{\boldsymbol\omega_S}/\pi^{0,(S)}_{\boldsymbol\omega_S})$ and a
uniform shift cancels in numerator and denominator, leaving every quote and hence the
loss fixed.

\paragraph{Value extraction.} In APMM the value of a trader's information on a parlay decomposes by the order
of the natural parameters it moves. Higher-order interactions are harder and
costlier to obtain and move rarely, so as an
information market APMM must reward their revelation more to elicit them. The
mechanism does this structurally, the order-$|S'|$ sub-trade is priced by a cost
function over that book's $2^{|S'|}$ outcomes, a finer book on which a correct
high-order belief returns a correspondingly larger payout. A single trade can
carry information at several orders at once, and its total value is the sum of
the payouts its component sub-trades earn across levels.


\section{Loss Analysis}
\label{sec:loss}

We bound the operator's worst-case loss in APMM under informed flow. The
argument runs through a comparison object, a second routing rule we call the
\emph{canonical mechanism}, that realizes the same hierarchy as a single joint
LMSR and so inherits a linear loss. We then bound the price discrepancy between
the two routings sub-trade by sub-trade and convert it into a loss gap through
a pointwise loss-difference lemma. Under a low-order assumption on the informed
flow, the discrepancy is supported on $O(M)$ matched sub-trades and bounded on
each, which bounds the operator's loss. We carry forward the loss machinery of \S\ref{sec:lmsr} and the trader model of
\S\ref{sec:trader}, and we add a low-order condition on the informed flow.

\begin{assumption}[Low-order informed flow]
\label{ass:low-order}
The aggregate informed state $P^\star=\widehat P$ is a $k$-order belief change
from the opening in the sense of \eqref{eq:k-order} with $k=O(1)$, so every
targeted parlay has order at most $k$. At round $t$ the arriving trader submits a
full-trade whose target we denote $S^{(t)}$, and we write
$\Delta\theta_S^{(t)}=\theta_S(P^{t+1})-\theta_S(P^{t})$ for the induced change in
the order-$|S|$ canonical mass on leg-set $S$. Let
$\mathcal D=\{S\subseteq[M] : \exists t,\ \Delta\theta_S^{(t)}\neq 0\}$ be the set of
interactions the flow ever moves, let $N_j$ be the number of order-$j$ full-trades over the flow, counting repeated trades on the same target. We assume the flow is sparse and bounded:
\begin{equation}
  \label{eq:low-order-quant}
  N_j\;\le\;\gamma M\,\rho^{\,j-1}\ \ (1\le j\le k),\quad \rho<1,
  \qquad
  \big|\Delta\theta_S^{(t)}\big|\le B_\theta
  \ \ \text{for all }S\in\mathcal D\text{ and all }t .
\end{equation}
\end{assumption}

Unlike prior parlay mechanisms, which posit a trade-arrival model, we impose a structural condition on the \emph{order} of informed flow:
traders predominantly express low-order beliefs. This is not assumed for convenience. We measure it directly on Kalshi order flow (\S\ref{sec:loworder-empirical}), where trade counts decay geometrically in leg count. Assumption~\ref{ass:low-order} formalizes the low-order flow of trades observable
in practice. The $k$-order condition makes the higher-order canonical mass
identically equal between $P^\star$ and the opening and caps each full-trade at
$2^k$ sub-trades. The geometric decay bounds the number of trades at each
order and, summed over orders, keeps the total flow linear,
$\sum_{j\le k}N_j\le\gamma M/(1-\rho)=O(M)$; since a target is counted at
least once by the trades that hit it, the traded family is linear as well,
$|\mathcal D|\le\sum_{j\le k}N_j=O(M)$.

\paragraph{The canonical mechanism}
\label{sec:canon}

The real routing of \S\ref{sec:trade-mechanics} drives each book $S$ to the true
marginal $\tau^\star_S$ by solving \eqref{eq:route}. The \emph{canonical mechanism}
visits the same sub-lattice $\{T\subseteq S\}$ in the same ascending order, but sets
each block to the canonical parameter of the implied joint rather than to the
residual marginal,
\begin{equation}
  \label{eq:canon-route}
  r^{(T),\mathrm{can}}_{\boldsymbol\omega_T}
  \;\longleftarrow\;
  b\,\theta_T(P^\star)\,\mathbf 1\{\boldsymbol\omega_T=\mathbf 1_T\}.
\end{equation}
Both rules touch an identical set of blocks, so they share the active set
$\mathcal A$, the down-closure of the targeted leg-sets. The two differ only in the
quote each book carries at close.

Next, we show that the canonical mechanism's worst-case loss is linear in $M$. Then,
we show how a closing-price discrepancy is connected to a loss discrepancy.
The proofs for lemmas are deferred to Appendix~\ref{app:pf-lem-canon} and
~\ref{app:pf-lem-pricetoloss}. 

\begin{restatable}[Canonical mechanism's linear worst-case loss]{lemma}{canonlinloss}
\label{lem:canon}
Under the low-order informed flow (Assumption~\ref{ass:low-order}),
and buy-only realization, the canonical mechanism's worst-case loss is linear in $M$,
\begin{equation}
  \label{eq:canon-worstcase}
  \max_{z^\star}\,\mathcal L_{\mathrm{can}}(z^\star)
  \;\le\;
  bM\log2\;+\;b\,2^{k}\,B_\theta\sum_{j=1}^{k}N_j
  \;=\;O(M).
\end{equation}
\end{restatable}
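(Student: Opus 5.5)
The plan is to split the canonical mechanism's loss into a base-market part and an interaction part. The first contributes the $bM\log 2$ term and the second the $b\,2^k B_\theta\sum_j N_j$ term.

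\emph{Reduction to per-parameter losses.} The canonical routing \eqref{eq:canon-route} writes every block $r^{(T)}$ as $b\,\theta_T$ on the single all-ones pattern. The aggregated share count \eqref{eq:Q-agg} on book $S$ is then $Q^{(S)}_{\boldsymbol\omega_S}=b\sum_{\emptyset\neq T\subseteq S}\theta_T\,\boldsymbol\omega_S^{T}$. This is exactly $b$ times the log-density expansion \eqref{eq:canonical}, restricted to the monomials supported in $S$. I will first record that the operator's realized loss on a fixed outcome $z^\star$ is path independent, as in \eqref{eq:loss}. Summed over books, it decomposes by sub-trade: each sub-trade on block $T$ at round $t$ costs the operator the payout increment minus the cost increment. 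Because $Q^{(S)}$ is affine in the blocks, the payout increment of a sub-trade on $T$ is at most $|\Delta r^{(T)}|$ in sup-norm on the outcome that realizes. The cost increment is nonnegative under buy-only realization. So each sub-trade's loss is at most $\|\Delta r^{(T),(t)}\|_\infty$, plus the gauge. The gauge cancels by the price-ratio argument given for buy-only realization.

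\emph{Base markets.} For singletons $|T|=1$ I will not use the sub-trade bound. Instead, each base book is an ordinary binary LMSR. Its worst-case loss from any opening quote to its close is $b\log(1/\pi^0_\omega)$ by \eqref{eq:loss}, which is $b\log2$ from a uniform opening. Summing over the $M$ base books gives $bM\log 2$. This is where I rely on the opening state being uniform on singletons. If it is not, I would replace $\log 2$ by $\log(1/\min\pi^0)$, which remains a constant.

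\emph{Interaction blocks.} For $|T|\ge2$, the change in $r^{(T)}$ at round $t$ equals $b\,\Delta\theta_T^{(t)}$ on one pattern. Assumption~\ref{ass:low-order} gives $|\Delta\theta_T^{(t)}|\le B_\theta$, so the sub-trade costs the operator at most $bB_\theta$. This holds in every market $S\supseteq T$ that reads this block, and that multiplicity is the delicate point. I intend to charge the loss only to the market on which the sub-trade is priced, namely $T$ itself, as in Definition~\ref{def:trade}. Higher markets' payouts from the shared block are then accounted as part of their own $Q$, and they are already paid for via the same cost potential. If that accounting fails, the fallback is to observe that only markets inside the active set $\mathcal A$ are ever driven off their opening quotes. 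A full-trade of order $j\le k$ touches at most $2^j\le 2^k$ blocks. Summing over all full-trades gives at most $2^k\sum_{j\le k}N_j$ sub-trades, each costing at most $bB_\theta$. This yields the second term, and it is $O(M)$ by the geometric bound $\sum_j N_j\le\gamma M/(1-\rho)$.

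\emph{Main obstacle.} The hard step is justifying that a shared low-order block's payout is not multiplied across the exponentially many supersets containing it. This is exactly the redundancy APMM is meant to avoid. I will try first to argue from the canonical structure: all books quote marginals of the single joint $\exp(\sum_T\theta_T z^T)$, so the system behaves as one joint LMSR. The joint loss is then $b\log(P^\star(z^\star)/P^0(z^\star))=b\sum_T(\theta_T^\star-\theta_T^0)z^{\star T}$, which is telescoped over rounds and bounded by the two terms above.
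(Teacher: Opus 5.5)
Your main line is correct, and it reaches \eqref{eq:canon-worstcase} by a more local route than the paper. The paper never bounds a sub-trade's loss directly. It reads the canonical blocks as a single joint LMSR over the $2^M$ atoms and shows that the hierarchy and the joint book pay identically on every atom: each $b\,\Delta\theta_T$ is paid exactly once, on atoms with $z^{\star T}=1$. So the two losses differ only in premium. It then bounds each matched premium difference by $b|\Delta\theta_T|$. Both premiums have the form $b\log\mathbb E_\mu[e^{u}]$ for the same increment $u=\Delta\theta_T\,\boldsymbol\omega_T^{T}$ under two different marginals, so both lie in $[b\min u,\,b\max u]$. The $bM\log 2$ term comes from the joint book's entropy bound.

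You instead apply the same interval bound to each sub-trade's own loss, payout $b\,u(z^\star|_T)$ minus premium $b\log\mathbb E_{\pi^{(T)}}[e^{u}]$. This is the gauge-free form of your buy-only estimate and gives exactly $b|\Delta\theta_T|$. You get $bM\log 2$ from the base books, which carry no lower blocks and are therefore genuine end-to-end binary LMSRs. Your version avoids the $2^M$-atom comparison object and the rerouting of the joint book's move. Since singleton sub-trades are also bounded by $bB_\theta$, it even yields the lemma without the $bM\log 2$ term. The paper's version buys the picture of the canonical hierarchy as a joint LMSR up to a premium perturbation, which frames its later APMM-versus-canonical comparison. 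Both need a uniform opening for the $\log 2$; you flag this, while the paper's ``product state'' glosses over it.

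What you should replace is your resolution of the ``main obstacle.'' Multiplicity is excluded by the convention the paper states explicitly: the operator's liability on book $S$ is the change of its own block $r^{(S)}$ only. Shares are sold only as own-block differentials. A move of a lower block $T$ shifts $Q^{(S)}$ for $S\supsetneq T$ premium-free and sells nothing in book $S$.

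The justifications you offer instead do not work:
\begin{itemize}
\item It is not that higher-book payouts are ``already paid for via the same cost potential.'' Those shifts collect no premium, so any liability they created would be unfunded.
\item Canonical books do not ``quote marginals of the single joint.'' They quote the truncations $\nu^{\mathrm{can}}_S$, which omit the crossing mass carried by the marginals (Lemma~\ref{lem:gap}). The joint comparison matches payouts, not premiums, which is exactly why the paper needs its premium-gap step.
\item Your fallback of counting sub-trades in $\mathcal A$ would not rescue linearity. If every active superset paid on a shared block, a sub-trade on a hub singleton would be charged once per active superset, up to $\Theta(M)$ times, giving $O(M^2)$ overall.
\end{itemize}
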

The proof for this lemma is deferred to Appendix~\ref{app:pf-lem-canon}.

\begin{restatable}[Loss difference from price difference]{lemma}{pricetoloss}
\label{lem:price-to-loss}
Let two LMSRs over a common outcome set $\Omega$ share liquidity $b$, open at
price families $\pi^A_0,\pi^B_0$, and close at $\pi^A,\pi^B$. Then on every
realized outcome $z^\star\in\Omega$,
\begin{equation}
  \label{eq:price-to-loss}
  L_A(z^\star)-L_B(z^\star)
  \;=\;b\,\log\frac{\pi^A(z^\star)}{\pi^B(z^\star)}
     \;-\;b\,\log\frac{\pi^A_0(z^\star)}{\pi^B_0(z^\star)} .
\end{equation}
Consequently, if $\|\log\pi^A-\log\pi^B\|_\infty\le\varepsilon_{\mathrm{end}}$ at
close and $\|\log\pi^A_0-\log\pi^B_0\|_\infty\le\varepsilon_{\mathrm{start}}$ at
open, then
$\big|L_A(z^\star)-L_B(z^\star)\big|\le
b\,(\varepsilon_{\mathrm{start}}+\varepsilon_{\mathrm{end}})$
for every $z^\star$; shared openings give $\varepsilon_{\mathrm{start}}=0$. Under
shared openings and any evaluation measure $\mu\in\Delta(\Omega)$,
\begin{equation}
  \label{eq:price-to-loss-kl}
  \mathbb E_{z\sim\mu}\!\big[L_A-L_B\big]
  \;=\;b\big(\mathrm{KL}(\mu\,\|\,\pi^B)-\mathrm{KL}(\mu\,\|\,\pi^A)\big).
\end{equation}
\end{restatable}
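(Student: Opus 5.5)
The plan is a direct computation from the pointwise LMSR loss formula \eqref{eq:loss} of \S\ref{sec:lmsr}, applied to each of the two markets separately. For an LMSR with liquidity $b$ that opens at $q_0$ and closes at $q$, the loss on the realized outcome is $L(q_0\!\to q;z^\star)=b\log\bigl(\pi(z^\star)/\pi_0(z^\star)\bigr)$. This depends only on the opening and closing prices, by path independence of the cost potential. So I will write
\[
L_A(z^\star)=b\log\frac{\pi^A(z^\star)}{\pi^A_0(z^\star)},\qquad
L_B(z^\star)=b\log\frac{\pi^B(z^\star)}{\pi^B_0(z^\star)} .
\]
Subtracting and regrouping the four logarithms into a closing ratio $\pi^A/\pi^B$ and an opening ratio $\pi^A_0/\pi^B_0$ gives \eqref{eq:price-to-loss} exactly. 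The only thing to check is that \eqref{eq:loss} applies to arbitrary opening states, including gauge-shifted ones from buy-only realization. This holds because a uniform shift of $q$ changes neither prices nor the payout-minus-revenue difference.

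For the uniform bound, I will take absolute values in \eqref{eq:price-to-loss} and use the triangle inequality. Each log-ratio term is bounded at the single coordinate $z^\star$ by the corresponding sup-norm: $|\log\pi^A(z^\star)-\log\pi^B(z^\star)|\le\varepsilon_{\mathrm{end}}$, and likewise at open. This gives $b(\varepsilon_{\mathrm{start}}+\varepsilon_{\mathrm{end}})$ for every $z^\star$. With shared openings $\pi^A_0=\pi^B_0$, the second term vanishes identically, so $\varepsilon_{\mathrm{start}}=0$.

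For \eqref{eq:price-to-loss-kl}, I assume shared openings, so $L_A-L_B=b\log\pi^A(z)-b\log\pi^B(z)$ pointwise. I then take expectations under $\mu$ and add and subtract $\sum_z\mu(z)\log\mu(z)$:
\[
\mathbb E_\mu[\log\pi^A-\log\pi^B]=\sum_z\mu(z)\log\frac{\mu(z)}{\pi^B(z)}-\sum_z\mu(z)\log\frac{\mu(z)}{\pi^A(z)} .
\]
This is $\mathrm{KL}(\mu\|\pi^B)-\mathrm{KL}(\mu\|\pi^A)$. The entropy term is finite because LMSR prices are strictly positive, so both KL terms are finite; if $\mu$ has zeros, I use the convention $0\log 0=0$.

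No step is a real obstacle. The only point needing care is bookkeeping: the outcome set $\Omega$ must be common to both markets so that $z^\star$ indexes both price vectors, and the KL identity must be written with $\mu$ in the first argument rather than mixing it up with \eqref{eq:loss-kl}, where the closing price is the weighting measure.
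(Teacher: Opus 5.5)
Your proposal is correct and follows the paper's proof essentially step for step. You apply the pointwise identity \eqref{eq:loss} to each market, difference the two to get \eqref{eq:price-to-loss}, bound it by the triangle inequality with the sup-norm hypotheses, and, under shared openings, take the $\mu$-expectation and rewrite it through the definition of relative entropy. Your remarks on gauge shifts and on finiteness of the KL terms are sound details that the paper leaves implicit.
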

The proof for this lemma is deferred to Appendix~\ref{app:pf-lem-pricetoloss}.

\paragraph{The per-book price gap}
\label{sec:price-gap}
 
We now compare the two routings book by book. On an active book $S$ the real
routing closes at the true marginal $\tau^\star_S$, while the canonical routing
closes at the truncated quote
$\nu^{\mathrm{can}}_S(\boldsymbol\omega_S)\propto
\exp\!\big(\sum_{T\subseteq S}\theta_T(P^\star)\,\boldsymbol\omega_S^{T}\big)$, which keeps
only the canonical mass internal to $S$. The next lemma shows their gap is
controlled by the updated interactions that \emph{cross} $S$, the leg-sets meeting
$S$ without sitting inside it. Here
$\Delta\theta_T=\theta_T(P^\star)-\theta_T(\pi^0)$ is the \emph{net} displacement
of interaction $T$ from the opening to the truth; the per-trade increments
telescope exactly, $\Delta\theta_T=\sum_t\Delta\theta_T^{(t)}$, so its magnitude
is dominated by the same traded mass that Lemma~\ref{lem:canon} counts.

\begin{restatable}[The gap is crossing mass, uniformly in time]{lemma}{gapcrossingmass}
\label{lem:gap}
Fix any round $t$ and write $P^t$ for the current belief,
$\nu^{\mathrm{can}}_S(P^t)(\boldsymbol\omega_S)\propto
\exp\!\big(\sum_{U\subseteq S}\theta_U(P^t)\,\boldsymbol\omega_S^{U}\big)$ for the
truncated quote at $P^t$. For every book $S$,
\begin{equation}
  \label{eq:gap}
  \big\|\log\tau_S(P^t)-\log\nu^{\mathrm{can}}_S(P^t)\big\|_\infty
  \;\le\!\!\sum_{\substack{T:\;T\cap S\neq\varnothing,\;T\not\subseteq S}}\!\!\big|\theta_T(P^t)\big|
  \;\le\;2^{k}B_\theta\!\sum_{j=1}^{k}N_j\;=\;O(M),
\end{equation}
and the left side vanishes for every book untouched by the flow. The terminal
case $P^t=P^\star$ recovers the closing-gap bound. In particular, immediately
after the matched sub-trades on book $S$ at round $t$, APMM quotes
$\tau_S(P^t)$ and the canonical mechanism quotes $\nu^{\mathrm{can}}_S(P^t)$, so
\eqref{eq:gap} bounds the post-sub-trade gap $\varepsilon_{\mathrm{end}}$.
\end{restatable}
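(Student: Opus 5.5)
The plan is to write both quotes as normalized exponential families on $\Omega_S$ and compare their log-densities atom by atom. By \eqref{eq:canonical}, for every $\boldsymbol\omega_S$,
\[
\tau_S(P^t)(\boldsymbol\omega_S)=\sum_{z:\,z|_S=\boldsymbol\omega_S}\exp\Big(\sum_{U\subseteq[M]}\theta_U(P^t)\,z^U\Big).
\]
Split the exponent into three groups. The first is $U\subseteq S$; these terms depend only on $\boldsymbol\omega_S$ and reproduce the exponent of $\nu^{\mathrm{can}}_S(P^t)$. The second is $U\cap S=\varnothing$; these depend only on $z_{S^c}$. The third is the crossing sets $T$ with $T\cap S\neq\varnothing$ and $T\not\subseteq S$. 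This gives the factorization
\[
\tau_S(P^t)(\boldsymbol\omega_S)\propto \nu^{\mathrm{can}}_S(P^t)(\boldsymbol\omega_S)\cdot R(\boldsymbol\omega_S),
\qquad
R(\boldsymbol\omega_S)=\mathbb E_{z_{S^c}\sim\lambda}\Big[\exp\Big(\sum_{T\text{ crossing}}\theta_T z^T\Big)\Big],
\]
where $\lambda$ is the distribution on $\{0,1\}^{S^c}$ proportional to $\exp(\sum_{U\cap S=\varnothing}\theta_U z^U)$. The key point is that $\lambda$ does not depend on $\boldsymbol\omega_S$.

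Since $z^T\in\{0,1\}$, the crossing exponent lies in $[-c,c]$ with $c=\sum_{T\text{ crossing}}|\theta_T(P^t)|$. Hence $\log R(\boldsymbol\omega_S)\in[-c,c]$ for every $\boldsymbol\omega_S$. Writing $\log\tau_S-\log\nu^{\mathrm{can}}_S=\log R-\log Z$ with normalizer $Z=\mathbb E_{\nu^{\mathrm{can}}_S}[R]\in[e^{-c},e^{c}]$ gives a sup-norm bound of $2c$. To get the stated constant $c$, measure the discrepancy modulo the additive gauge: the oscillation of $\log R$ is at most $2c$, so centring at the midpoint gives $c$. Alternatively, tighten by noting that each $z^T\ge0$, so one can choose signs to get a one-sided bound. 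I will state the comparison up to the gauge that \S\ref{sec:trade-mechanics} already quotients out, since quotes are invariant to uniform shifts. This is the step I expect to need care: reconciling the factor $2$ from normalization with the clean constant in \eqref{eq:gap}. If the gauge argument does not suffice, I will accept $2c$ and absorb it into the $O(M)$ bound.

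For the second inequality, observe that $\theta_T(P^t)$ for $|T|\ge2$ equals its opening value plus $\sum_{s<t}\Delta\theta^{(s)}_T$. I will read the opening as the independence/uniform state, so that higher-order opening mass vanishes and crossing sets can only carry traded mass; by Assumption~\ref{ass:low-order} this traded mass lies in $\mathcal D$ with $|T|\le k$. The bound then follows in three counts. Each full-trade of order $j\le k$ changes parameters only on subsets of its target, at most $2^k$ of them. Each change has magnitude at most $B_\theta$. There are $\sum_j N_j$ full-trades. Summing $|\theta_T|\le\sum_s|\Delta\theta_T^{(s)}|$ over all $T$ (crossing or not) therefore gives $2^kB_\theta\sum_jN_j=O(M)$ by the geometric decay.

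Finally, if book $S$ is untouched, no set meeting $S$ has been moved. So every crossing $\theta_T$ keeps its opening value, which is zero, and the gap vanishes. The terminal case $P^t=P^\star$ follows by substitution. The post-sub-trade statement uses the routing rules: \eqref{eq:route} drives book $S$ exactly to $\tau_S(P^t)$, and \eqref{eq:canon-route} sets its aggregate $Q^{(S)}$ to $b\sum_{U\subseteq S}\theta_U\boldsymbol\omega_S^U$, whose LMSR quote \eqref{eq:cost-design} is $\nu^{\mathrm{can}}_S(P^t)$.
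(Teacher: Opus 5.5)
Your decomposition is the paper's own. You split the canonical expansion of $\log P^t$ into internal, disjoint and crossing leg-sets, absorb the disjoint mass into an $\boldsymbol\omega_S$-independent weight, and bound the gap by the crossing factor. You then telescope the crossing parameters from a product opening, with at most $2^k$ increments of size at most $B_\theta$ per full-trade.

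The one step that does not go through is how you reach the constant in the first inequality of \eqref{eq:gap}. The containment $\log R\in[-c,c]$ loses a factor of two, and the gauge repair is not available. Both $\tau_S(P^t)$ and $\nu^{\mathrm{can}}_S(P^t)$ are normalized, so $\log\tau_S-\log\nu^{\mathrm{can}}_S=\log R-\log\mathbb E_{\nu^{\mathrm{can}}_S}[R]$ exactly, and no additive constant is left to choose. The gauge of \S\ref{sec:trade-mechanics} acts on the blocks $r^{(S)}$ and is invisible in normalized prices. The midpoint-centred quantity $\tfrac12\,\mathrm{osc}(\log R)$ is not the sup-norm in \eqref{eq:gap}. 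Settling for $2c$ keeps the $O(M)$ conclusion but does not prove the stated inequality.

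The remedy is the one-sided remark you make in passing, carried through. Because $z^T\in\{0,1\}$, each crossing term $\theta_Tz^T$ lies in $[\min(\theta_T,0),\max(\theta_T,0)]$. So for every $z$ the crossing exponent lies in the single fixed interval $I=[\sum_T\min(\theta_T,0),\sum_T\max(\theta_T,0)]$, whose length is exactly $c$. Then $R(\boldsymbol\omega_S)$, a $\lambda$-average of exponentials of points of $I$, satisfies $\log R(\boldsymbol\omega_S)\in I$. The same holds for $\log\mathbb E_{\nu^{\mathrm{can}}_S}[R]$, so $|\log R-\log\mathbb E_{\nu^{\mathrm{can}}_S}[R]|\le c$ pointwise. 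This is exactly the paper's argument: the oscillation of the crossing weight $\Phi_S$ is at most the crossing sum, and a normalized log-price gap is at most the oscillation of the aggregate difference.

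Separately, say what ``untouched'' means before inferring that no set meeting $S$ has moved. If it means that no traded target meets $S$, your argument is fine. If it means only that $S$ lies outside the active set (no target contains $S$), the inference fails, and so does the vanishing claim. For example, start from a uniform opening and make a single trade on $\{1,3\}$ that moves only $\theta_{\{1,3\}}=a\neq0$. Book $\{1,2\}$ is then outside the active set, yet $\tau_{\{1,2\}}(z_1,z_2)\propto 1+e^{a z_1}$ while $\nu^{\mathrm{can}}_{\{1,2\}}$ is uniform, so the gap is nonzero. A clean sufficient hypothesis is that every target is contained in $S$ or disjoint from it, since then no crossing parameter ever leaves its opening value zero. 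State that hypothesis, or the stronger one that no target meets $S$.
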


The proof for this lemma is deferred to Appendix~\ref{app:pf-lem-gap}.

\begin{restatable}[Uniform price gap between routings]{lemma}{uniformgap}
\label{lem:uniform-gap}
Run the same trade stream through APMM and the canonical mechanism. There is a
constant $C_k$ depending only on $k$ such that at every moment of the execution
and for every book $S$,
\begin{equation}
  \label{eq:uniform-gap}
  \big\|\log\pi^{\mathrm{APMM}}_S-\log\pi^{\mathrm{can}}_S\big\|_\infty
  \;\le\;C_k\,\Big(2^{k}B_\theta\!\sum_{j=1}^{k}N_j\Big)\;=\;O(M),
\end{equation}
and the gap is identically zero on books outside the active set. In particular
the pre-sub-trade gaps $\varepsilon_{\mathrm{start}}$ of every matched sub-trade
are $O(M)$.
\end{restatable}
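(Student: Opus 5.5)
Since each book's quote is determined by its aggregated share vector $Q^{(S)}$ through \eqref{eq:cost-design}, I will compare the two routings through their aggregated shares. Both mechanisms are fed the same trade stream and touch the same active set $\mathcal A$. A book $S\notin\mathcal A$ has every block $r^{(T)}$, $T\subseteq S$, untouched in both runs. Its aggregate \eqref{eq:Q-agg} therefore equals the shared opening, and the gap is identically zero there.

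For $S\in\mathcal A$, I will use the elementary bound $\|\log\pi(Q)-\log\pi(Q')\|_\infty\le \frac{2}{b}\|Q-Q'\|_\infty$ modulo constant shifts. Gauge constants cancel, which is consistent with the buy-only remark. It then suffices to bound $\|Q^{(S),\mathrm{APMM}}-Q^{(S),\mathrm{can}}\|_\infty$ up to an additive constant, at every moment.

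The key steps are as follows. First, consider a moment immediately after the sub-trades on book $S$ in some round $t$. By Lemma~\ref{lem:gap}, APMM quotes $\tau_S(P^t)$ and the canonical mechanism quotes $\nu^{\mathrm{can}}_S(P^t)$. The log-gap is therefore at most $G:=2^kB_\theta\sum_j N_j$.

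Second, handle intermediate moments. These are moments in the middle of a full-trade, or later moments when book $S$ has gone stale because a different book was traded. Book $S$ changes only when some block $r^{(T)}$ with $T\subseteq S$ is rewritten. At such a moment, $Q^{(S)}=\sum_{T\subseteq S}r^{(T)}$ is a sum of at most $2^{k}$ blocks, because active leg-sets have order at most $k$ under Assumption~\ref{ass:low-order}. Each block was last set at the end of its own sub-trade, possibly in a different round $t_T$. I will write $r^{(T),\mathrm{APMM}}-r^{(T),\mathrm{can}}$ via the recursion \eqref{eq:route} as a Möbius-type difference of $\log\tau_{T'}(P^{t_T})-\log\nu^{\mathrm{can}}_{T'}(P^{t_T})$ over $T'\subseteq T$. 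Each term is bounded by $G$ from Lemma~\ref{lem:gap}. Inclusion--exclusion over the sub-lattice gives at most $2^{|T|}$ terms per block. Summing over the blocks of $S$ yields $\|\Delta Q^{(S)}\|_\infty\le \sum_{T\subseteq S}2^{|T|}G\le 3^{k}G$. Hence $C_k=2\cdot 3^k$, up to the $1/b$ normalization (the gaps are in $b$-units, since $g=b\log\tau$), works.

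Third, the statement about $\varepsilon_{\mathrm{start}}$ follows. The pre-sub-trade state of a matched sub-trade is just some moment of the execution, so the uniform bound applies there.

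I expect the main obstacle to be the second step. Blocks of the same book were set at different rounds, against different beliefs $P^{t_T}$. The canonical block is $b\theta_T(P^\star)$, or more carefully, it must be interpreted as $b\theta_T(P^{t_T})$ to match Lemma~\ref{lem:gap}'s time-uniform form. I would first try to confirm that the canonical rule \eqref{eq:canon-route} is applied with the current belief at each round. The Möbius inversion of the marginal gap must also stay controlled by crossing mass at each time separately. Lemma~\ref{lem:gap}'s bound $G$ is uniform in $t$, so mixing times costs only the combinatorial factor $3^k$. That factor is absorbed into $C_k$.
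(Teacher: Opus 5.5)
Your argument is correct for every book of order at most $k$, and it takes a genuinely different route from the paper's. Write $\mathrm{Mass}=2^kB_\theta\sum_jN_j$. The paper never compares the two routings block by block. Instead it compares each routing \emph{separately} to a fresh reference at the current round $t$, in three steps:
\begin{itemize}
\item canonical staleness is bounded by $\sum_{U\subseteq S}|\theta_U(P^t)-\theta_U(P^{t_U})|\le\mathrm{Mass}$;
\item APMM staleness is bounded by an induction on $|S|$, combining a marginal-drift estimate $\mathrm{osc}\big(\log\tau_S(P)-\log\tau_S(P')\big)\le3\,\mathrm{Mass}$ with M\"obius inversion of the lower aggregates;
\item it closes with Lemma~\ref{lem:gap} at time $t$ and the triangle inequality, with an unoptimized recursive constant $C_k=c_k+2$.
\end{itemize}

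You instead use the fact that both routings have identical touch times. Staleness therefore cancels in $D_T=r^{(T),\mathrm{APMM}}-r^{(T),\mathrm{can}}$, which stays frozen after its last touch $t_T$. The fact that makes this work should be stated explicitly rather than left inside ``via the recursion'': every sweep runs over a down-closed sub-lattice. When $T$ is reset at round $t_T$, every $T'\subsetneq T$ was reset earlier in the same sweep. So at that instant $Q^{(V),\mathrm{APMM}}=b\log\tau_V(P^{t_T})+c_V$ for all $V\subseteq T$, and M\"obius inversion gives
\[
  D_T(\boldsymbol\omega_T)\;=\;b\!\!\sum_{\emptyset\neq V\subseteq T}\!\!(-1)^{|T\setminus V|}\big[\log\tau_V(P^{t_T})-\log\nu^{\mathrm{can}}_V(P^{t_T})\big](\boldsymbol\omega_T|_V)\;+\;\mathrm{const}.
\]
On the canonical side this uses that $b\,\theta_T(P^{t_T})\boldsymbol\omega_T^{T}$ is the M\"obius inversion of $\{b\log\nu^{\mathrm{can}}_V(P^{t_T})\}_{V\subseteq T}$ up to constants; the normalizers contribute only constants. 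Your reading of the canonical block as $b\,\theta_T(P^{t_T})$ is exactly the one the paper's proof adopts.

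Your route buys three things:
\begin{itemize}
\item a shorter argument with no drift bound and no recursion;
\item an explicit constant such as $C_k=2\cdot3^k$;
\item immediate localization: each $D_T$ is controlled by crossing sums of subsets of $T$ at a single time, so the $O(1)$ start gaps of Corollary~\ref{cor:hub} follow directly from the first inequality of Lemma~\ref{lem:gap}.
\end{itemize}
The paper's route buys a stronger intermediate fact: each routing's quote is separately within $O(\mathrm{Mass})$ of the fresh reference at the current time.

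One step is wrong as you argue it, though the paper's proof asserts the same thing. $\mathcal A$ is down-closed, not up-closed, so a book outside $\mathcal A$ can still contain active blocks. Take the single target $\{1,2\}$ with $\theta_{12}(P^t)\neq0$. Book $\{1,3\}$ lies outside $\mathcal A$, yet its two aggregates differ by $r^{(\{1\}),\mathrm{APMM}}-r^{(\{1\}),\mathrm{can}}$. That difference has oscillation $b\,\big|\log\big(\tau_1(1)/\tau_1(0)\big)-\theta_1\big|$, which is nonzero. So the zero-gap claim is safe only for books disjoint from every target.

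Your Step~2 extends verbatim to every book with $|S|\le k$, active or not. It does not extend to $|S|>k$, where the number of active blocks inside $S$ is not bounded in terms of $k$; for $S=[M]$ it is $|\mathcal A|$. None of this affects the $\varepsilon_{\mathrm{start}}$ conclusion, since matched sub-trades only occur on books in $\mathcal A$.
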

The proof for this lemma is deferred to Appendix~\ref{app:pf-lem-uniformgap}.

\subsection{Loss bounds}
\label{sec:loss-main}

Both routings execute the same sweeps, so their sub-trades pair one to one;
for each matched sub-trade on a book $S$ we write
$\varepsilon_{\mathrm{start}}$ and $\varepsilon_{\mathrm{end}}$ for the
sup-norm log-price gaps
$\|\log\pi^{\mathrm{APMM}}_S-\log\pi^{\mathrm{can}}_S\|_\infty$ immediately
before and after the sub-trade.

\begin{restatable}[At most quadratic worst-case loss]{theorem}{quadloss}
\label{thm:main}
Under Assumption~\ref{ass:low-order} and buy-only realization,
the operator's loss in APMM is at most quadratic in $M$ on every realized outcome,
\begin{equation}
  \label{eq:main}
  \max_{z^\star}\,\mathcal L_{\mathrm{APMM}}(z^\star)
  \;\le\;
  \max_{z^\star}\,\mathcal L_{\mathrm{can}}(z^\star)
  \;+\;b\!\!\sum_{\text{matched sub-trades}}\!\!\big(\varepsilon_{\mathrm{start}}+\varepsilon_{\mathrm{end}}\big)
  \;=\;O(M)+O(M)\cdot O(M)\;=\;O(M^2).
\end{equation}
\end{restatable}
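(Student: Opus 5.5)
The plan is to decompose the operator's APMM loss into per-sub-trade losses, pair each with the matched canonical sub-trade, and bound the pairwise differences with Lemma~\ref{lem:price-to-loss}. Since every sub-trade is an LMSR trade on one book $S$ priced by $C^{(S)}$, path independence \eqref{eq:loss} gives the realized-outcome loss of a sub-trade on $S$ as $b\log\big(\pi^{(S),\mathrm{after}}(z^\star_S)/\pi^{(S),\mathrm{before}}(z^\star_S)\big)$. Summing over the ordered sequence of sub-trades of all full-trades writes $\mathcal L_{\mathrm{APMM}}(z^\star)$ as a sum of such terms. Buy-only realization guarantees that the gauge constants do not alter these terms. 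The same holds for the canonical mechanism, whose sub-trades touch the same blocks in the same order, so the two sums are indexed by the same set of matched sub-trades.

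For each matched pair on book $S$, I would apply Lemma~\ref{lem:price-to-loss} with $A=$APMM and $B=$canonical. The two open at the pre-sub-trade quotes and close at the post-sub-trade quotes, so the per-sub-trade difference is at most $b(\varepsilon_{\mathrm{start}}+\varepsilon_{\mathrm{end}})$ pointwise in $z^\star$. Summing over matched pairs gives $\mathcal L_{\mathrm{APMM}}(z^\star)\le\mathcal L_{\mathrm{can}}(z^\star)+b\sum(\varepsilon_{\mathrm{start}}+\varepsilon_{\mathrm{end}})$. Taking the maximum over $z^\star$ on both sides yields the first inequality in \eqref{eq:main}.

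The asymptotics then follow from the earlier results in three pieces:
\begin{itemize}
\item Lemma~\ref{lem:canon} gives $\max\mathcal L_{\mathrm{can}}=O(M)$.
\item Lemma~\ref{lem:gap} bounds each $\varepsilon_{\mathrm{end}}$ by $2^kB_\theta\sum_jN_j=O(M)$, and Lemma~\ref{lem:uniform-gap} bounds each $\varepsilon_{\mathrm{start}}$ by $C_k$ times the same quantity.
\item The number of matched sub-trades is at most $2^k\sum_{j\le k}N_j\le 2^k\gamma M/(1-\rho)=O(M)$, since each full-trade has order at most $k$ and so triggers at most $2^k$ sub-trades.
\end{itemize}
The product of $O(M)$ sub-trades with $O(M)$ gap per sub-trade gives the $O(M^2)$ term.

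The main obstacle is the decomposition step, not the counting. Lemma~\ref{lem:price-to-loss} is stated for a single LMSR from open to close. Here each book is moved repeatedly, and also indirectly, when lower-order blocks shared into $Q^{(S)}$ change through sub-trades on other books. I would handle this by attributing each sub-trade's cost only to the book whose cost function prices it, per Definition~\ref{def:trade}, and by using the fact that the operator's payout from a sub-trade on block $r^{(S')}$ is the share differential $\Delta Q^{(S')}$. The per-sub-trade loss is therefore exactly the log-price-ratio on $S'$ between the instants just before and just after that sub-trade, which is the form Lemma~\ref{lem:price-to-loss} needs. Indirect price movements on higher books then show up only as changed opening quotes for later sub-trades, and those are exactly what $\varepsilon_{\mathrm{start}}$, controlled uniformly in time by Lemma~\ref{lem:uniform-gap}, accounts for.
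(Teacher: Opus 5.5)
Your proposal is correct and follows the paper's proof essentially step for step. Own-block accounting splits each mechanism's loss into per-sub-trade segment losses $b\log(\pi_{\mathrm{after}}/\pi_{\mathrm{before}})$ on the traded book. Lemma~\ref{lem:price-to-loss} then bounds each matched pair by $b(\varepsilon_{\mathrm{start}}+\varepsilon_{\mathrm{end}})$, with Lemmas~\ref{lem:uniform-gap} and~\ref{lem:gap} supplying the $O(M)$ gaps, and the $2^k\sum_j N_j=O(M)$ sub-trade count yields the quadratic term on top of Lemma~\ref{lem:canon}. You handle the obstacle the same way the paper does, absorbing the premium-free indirect price moves through shared lower blocks into $\varepsilon_{\mathrm{start}}$, which is how it deals with a book not being an end-to-end LMSR.
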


Theorem~\ref{thm:main} is proved by comparing the two routings sub-trade by
sub-trade. A book is not an end-to-end LMSR; its price moves premium-free
whenever a shared lower block shifts. So the loss is decomposed into matched
per-sub-trade segments, each bounded by the two-sided
Lemma~\ref{lem:price-to-loss} with the post-trade gap controlled by
Lemma~\ref{lem:gap} and the pre-trade gap by Lemma~\ref{lem:uniform-gap}; the
flow contributes $O(M)$ sub-trades in total. The full proof is deferred to
Appendix~\ref{app:quad-loss}.

\begin{corollary}[Diffuse flow is linear]
\label{cor:hub}
Suppose no base event lies in more than $d=O(1)$ leg-sets of $\mathcal D$ (no
\emph{hub}: each leg anchors $O(1)$ updated interactions), and the flow keeps
every interaction bounded, $\sup_t|\theta_T(P^t)|\le B_\theta$. Then
$\max_{z^\star}\mathcal L_{\mathrm{APMM}}(z^\star)=O(M)$, matching the
canonical mechanism and the $\Theta(M)$ cost of the base markets.
\end{corollary}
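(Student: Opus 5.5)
We prove Corollary~\ref{cor:hub} by rerunning the sub-trade comparison behind Theorem~\ref{thm:main}. The only change is that the generic $O(M)$ per-sub-trade gap of Lemmas~\ref{lem:gap} and~\ref{lem:uniform-gap} is replaced by an $O(1)$ gap that the no-hub hypothesis makes available. As in \eqref{eq:main}, we start from
\[
\max_{z^\star}\mathcal L_{\mathrm{APMM}}(z^\star)\le\max_{z^\star}\mathcal L_{\mathrm{can}}(z^\star)+b\sum_{\text{matched sub-trades}}(\varepsilon_{\mathrm{start}}+\varepsilon_{\mathrm{end}}).
\]
The first term is $O(M)$ by Lemma~\ref{lem:canon}. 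The number of matched sub-trades is at most $2^k\sum_{j\le k}N_j=O(M)$ by Assumption~\ref{ass:low-order}. So it suffices to show that every $\varepsilon_{\mathrm{start}}$ and $\varepsilon_{\mathrm{end}}$ is $O(1)$, with constants depending only on $k,d,B_\theta$.

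\emph{Post-trade gap.} We use the first inequality of Lemma~\ref{lem:gap} rather than its crude second bound. At any round $t$, the gap on book $S$ is at most $\sum_{T\cap S\neq\varnothing,\,T\not\subseteq S}|\theta_T(P^t)|$. Only interactions whose canonical parameter differs from the opening can contribute. Here I take the opening to be independent (or, more generally, to have higher-order mass internal to the books), so that $\theta_T(P^t)\neq0$ with $|T|\ge2$ forces $T\in\mathcal D$. Every crossing $T$ must contain some leg $i\in S$. Since $|S|\le k$ and each leg lies in at most $d$ members of $\mathcal D$, there are at most $kd$ crossing interactions. With $\sup_t|\theta_T(P^t)|\le B_\theta$, this gives $\varepsilon_{\mathrm{end}}\le kdB_\theta=O(1)$ uniformly in $t$.

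\emph{Pre-trade gap.} Between consecutive sub-trades on $S$, the APMM price on $S$ moves only when a shared lower block $r^{(T)}$ with $T\subsetneq S$ is rewritten. That is a sub-trade on a book $T\subseteq S$ reached by a target $S'\supseteq T$. I would redo the proof of Lemma~\ref{lem:uniform-gap} in localized form. The drift of $\log\pi^{\mathrm{APMM}}_S-\log\pi^{\mathrm{can}}_S$ caused by a rewrite of $r^{(T)}$ is bounded by the post-trade gap on $T$, which is $O(kdB_\theta)$ by the previous step. The same kind of drift also comes from crossing mass through $S'$. The number of distinct $T\subseteq S$ is at most $2^k$. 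Summing therefore gives $\varepsilon_{\mathrm{start}}\le C_k\,kd\,B_\theta=O(1)$, in place of $C_k\cdot O(M)$.

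The main obstacle is this pre-trade step. The worry is that the gap on $S$ accumulates over many rewrites of its sub-blocks across time, rather than being reset. To rule this out, I would show that after each rewrite of $r^{(T)}$ the APMM gap on $S$ is again a function of the current $P^t$ alone. It would be bounded by crossing mass of books contained in $S$, exactly as in Lemma~\ref{lem:gap}, so it does not grow additively with the number of trades. Once that is done, the sum is $O(M)\cdot O(1)=O(M)$, which gives the claim and matches the $bM\log2$ lower-order cost of the base markets.
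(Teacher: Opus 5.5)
Your skeleton matches the paper's: the same decomposition through \eqref{eq:main}, the same $O(M)$ count of matched sub-trades, and the same post-trade bound $\varepsilon_{\mathrm{end}}\le kdB_\theta$ from the crossing sum of Lemma~\ref{lem:gap}. Your caveat that this needs a product opening, so that a nonzero crossing $\theta_T(P^t)$ forces $T\in\mathcal D$, is the same assumption the paper's proof of that lemma uses.

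The gap is the pre-trade step. You rightly flag it as the crux, but you leave it open, and the route you sketch does not close it. Summing a per-rewrite drift over the $2^k$ \emph{distinct} blocks $T\subsetneq S$ undercounts. Between two sub-trades on $S$, a block $r^{(T)}$ is rewritten once per full-trade on any target containing $T$ but not $S$. Since $N_j$ counts repeated trades, this can happen $\Theta(M)$ times. Any bound built by adding per-rewrite drifts therefore gives $\varepsilon_{\mathrm{start}}=O(M)$ and returns you to the quadratic bound. Your proposed remedy, that after each rewrite the gap on $S$ is a function of the current $P^t$ alone, is false. In both routings each block holds the value fixed at its own last touch $t_U$: the canonical block is $b\,\theta_U(P^{t_U})$, and APMM's $r^{(U)}$ was solved against the lower blocks as they stood at $t_U$. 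So the quotes on a book that was not itself just traded depend on the whole history of touch times.

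The missing idea is exactly what the corollary's second hypothesis is for. Do not track drift rewrite by rewrite. Instead, at an arbitrary moment compare each routing to the fresh references $\tau_S(P^t)$ and $\nu^{\mathrm{can}}_S(P^t)$, as in the proof of Lemma~\ref{lem:uniform-gap}. Then bound each staleness term as a difference of two values rather than a sum of increments:
\begin{itemize}
\item $|\theta_U(P^t)-\theta_U(P^{t_U})|\le 2B_\theta$ by $\sup_t|\theta_U(P^t)|\le B_\theta$, however many trades intervened;
\item the crossing-weight oscillations in the marginal-drift bound are at most $kdB_\theta$ at each of the two times.
\end{itemize}
These sums run only over the at most $2^k$ internal and $kd$ crossing leg-sets of books in $S$'s sub-lattice. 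Steps (ii) and (iii) of that proof, including the induction on $|S|$, then go through with $O(1)$ in place of the global traded mass and give $\varepsilon_{\mathrm{start}}\le C_{k,d}B_\theta$. This is how the paper localizes the lemma, and with it in place the rest of your argument goes through.
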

 
\begin{proof}
Each active book meets at most $kd$ leg-sets of $\mathcal D$, so every
crossing sum in Lemma~\ref{lem:gap} has $O(1)$ terms, each at most $B_\theta$
by the boundedness hypothesis; hence every post-sub-trade gap is
$\varepsilon_{\mathrm{end}}\le kd\,B_\theta=O(1)$. The pre-sub-trade gaps are
$O(1)$ by localizing Lemma~\ref{lem:uniform-gap}: every quantity in its
proof---the staleness sums of both routings and the marginal-drift
bounds---runs only over leg-sets meeting $S$, of which there are at most
$2^{k}$ internal and $kd$ crossing for any book in $S$'s sub-lattice, each
with value bounded by $B_\theta$ under the hypotheses; running the same
argument with these local bounds in place of the global traded mass gives
$\varepsilon_{\mathrm{start}}\le C_{k,d}\,B_\theta=O(1)$. The flow fires at
most $2^{k}\sum_j N_j=O(M)$ sub-trades, so the gap sum in \eqref{eq:main} is
$O(M)\cdot O(1)=O(M)$; adding
$\max_{z^\star}\mathcal L_{\mathrm{can}}(z^\star)=O(M)$
(Lemma~\ref{lem:canon}) yields the claim.
\end{proof}


\section{Evaluation}
\label{sec:evaluation}

Our evaluation pursues three goals. First, we show that
Assumption~\ref{ass:low-order}, the low-order flow, holds on real prediction-market data. 
Second, we show in simulations on synthetic low-order trade flow that APMM's loss grows linearly with $M$ and confirm the worst-case loss
bound; we check robustness to the flow parameters through one-at-a-time sweeps,
and we verify that the informed trader's portfolio and odds are not hurt under
the new design. Third, we replay historical Kalshi order flow through APMM and
show that the results survive under realistic trade sequences.

\subsection{Real order flow is low-order}
\label{sec:loworder-empirical}
We showed in the previous section that APMM approximates the natural parameters in the share counts of each
parlay, so, a trader effectively trades in interaction orders. This makes the scarcity of high-order flow easy to explain, for two reasons.
First, information about high-order interactions is intrinsically harder and
more costly to obtain and to track. Second, even the latent distribution
exhibits few changes in its high-order natural parameters, because a structural
change is typically required for it. This pattern is consistent with a broad literature on financial and other complex systems, where joint distributions are captured well by
first- and second-order (pairwise) interactions
alone~\cite{valle2021equity, rana2026parlaymarket}.

\begin{table}[t]
\centering
\caption{Traded same-game parlays by leg count over the NBA corpus (April to June 2026).}
\label{tab:flow}
\begin{tabular}{lrrrrrrr}
\toprule
legs & 2 & 3 & 4 & 5 & 6 & $7+$ & all \\
\midrule
markets & 6{,}250 & 372 & 176 & 140 & 109 & 325 & 7{,}372 \\
trades  & 27{,}984 & 406 & 210 & 168 & 114 & 375 & 29{,}257 \\
\bottomrule
\end{tabular}
\end{table}

Consequently, higher-leg parlays receive proportionally fewer trades. We
confirm this using trading data from the 2026 NBA
season. Our corpus is the NBA same-game parlay markets' lifetime trades 
on Kalshi over April to June 2026, a span of roughly nine weeks. Across 
this period $92$ games carried traded parlays, with $7{,}372$
distinct parlay contracts traded over approximately $4.25$ million contracts of volume.

The trading record exhibits low-order dominance stated in 
Assumption~\ref{ass:low-order} directly. Table~\ref{tab:flow} reports
traded markets and parlay counts by leg count, and activity is overwhelmingly concentrated at
the lowest orders. Total trade count is the empirical equivalent of $N_j$ in Assumption~\ref{ass:low-order}.
More than $95\%$ of the parlays are 2-leg, and more than $99\%$ of them have 7 legs or less.
Although the maximum $k$ in this corpus with $M~=2{,}907$ is a $k=32$ leg parlay,
we can bound $k$ at $7$ without losing much coverage. Finally, the bound on the change in natural parameter magnitude is inherently independent from $M$, depending only on the definition of the markets.

The observed low order assumptions hold on data from each NBA 
game independently as well. We only provide the aggregate data here as it is easier to showcase. 
We provide more supporting data from other markets like politics, 
weather, and other topics in the Appendix~\ref{app:supp-data}. 

\subsection{Loss scaling with M}

\paragraph{Setup.} We generate an opening belief $P^0$ and a latent truth process $P^{(t)}$ in
canonical coordinates. At each timestamp $t$, the latent process receives an update in a
$\theta$-coordinate. A trader's information is modeled by selecting the leg
set $S$ whose natural parameters have just moved,
$\Delta\theta_S = \theta^{t}_S - \theta^{t-1}_S$, and marginalizing onto $S$ to
determine the resulting trade on $S$. Then, an operator takes the full-trade and 
executes it on the market following the mechanics stated in \S~\ref{sec:trade-mechanics}.

\begin{figure}[t]
  \centering
  \begin{subfigure}[t]{0.48\linewidth}
    \centering
    \includegraphics[width=\linewidth]{figures/paper_net_loss_pyramid_dec1.25_M20.png}
    \caption{Market maker net loss versus number of base events $M$.}
    \label{fig:sim-scaling}
  \end{subfigure}
  \hfill
  \begin{subfigure}[t]{0.48\linewidth}
    \centering
    \includegraphics[width=\linewidth]{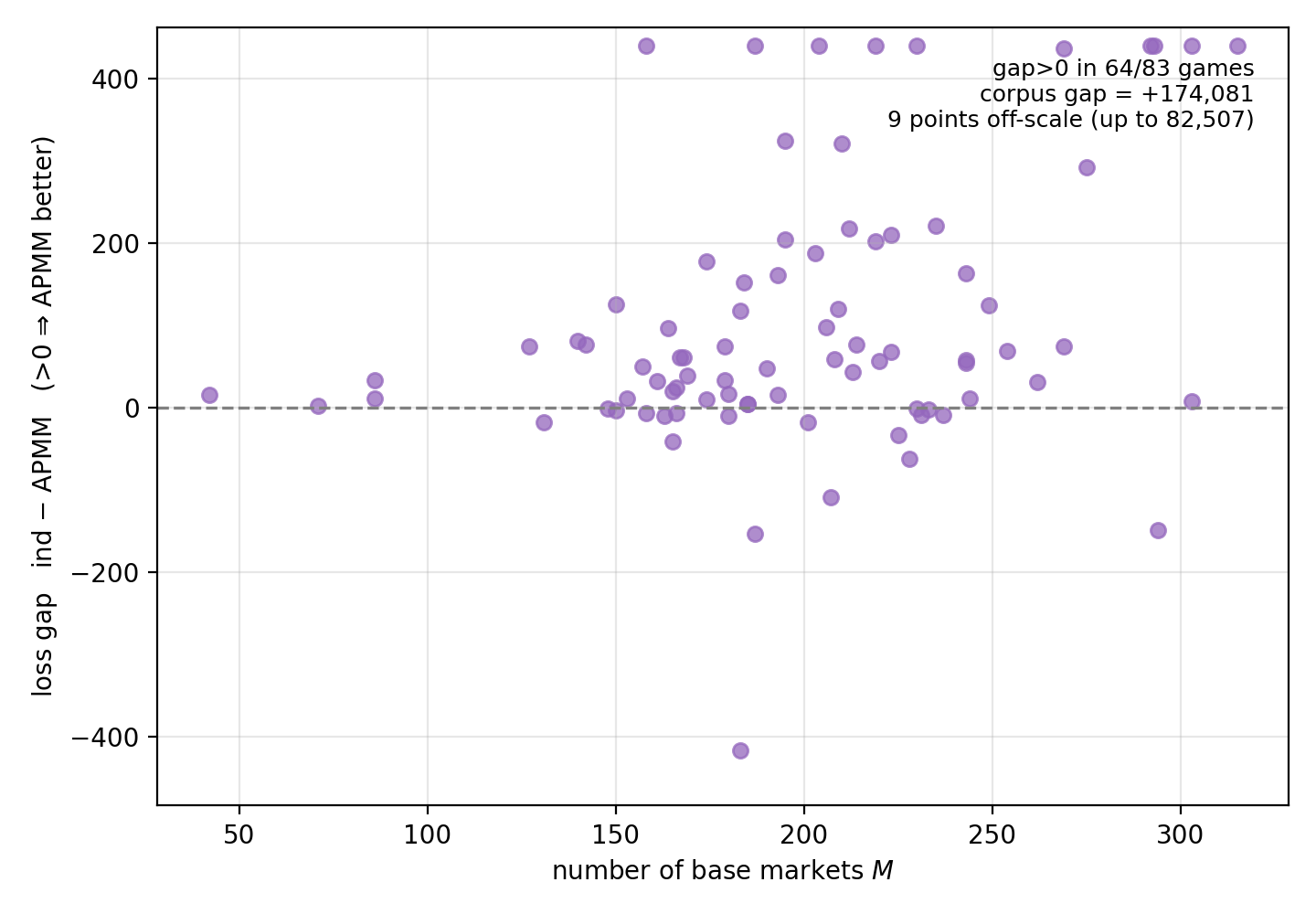}
    \caption{Per-game loss gap against base-market count $M$; positive means APMM
    loses less.}
    \label{fig:replay-gap}
  \end{subfigure}
\end{figure}

\paragraph{Market maker loss.} Figure~\ref{fig:sim-scaling} reports operator loss against $M$. 
We compare APMM against two references. First one, from \S\ref{sec:baseline}, 
the independent baseline, which runs one LMSR per leg-set (\emph{ind}). 
Second one, a base-market-only policy that executes only base market trade
requests and discards the parlay trades (\emph{base}). APMM's loss grows as $O(M)$
across the sweep, tracking the base-market cost, while the independent baseline 
grows exponentially and separates from both within a handful of legs. 
Overlaid is APMM's theoretical worst-case bound of Theorem~\ref{thm:main}, 
which the realized loss respects at every $M$.

The low-order flow is governed by the three parameters of
Assumption~\ref{ass:low-order}, set to $k=6$, $\rho=0.8$, and
$B_\theta=5$, with liquidity $b=10$. For each $M$ we run $40$ seeds and report
the mean net loss; the whiskers span the full min--max range over seeds, confirming the
worst-case bound empirically. The mean itself is markedly flatter than the
quadratic envelope, consistent with the linear expected loss of
Corollary~\ref{cor:hub}.

\begin{figure}[t]
  \centering
  \includegraphics[width=\linewidth]{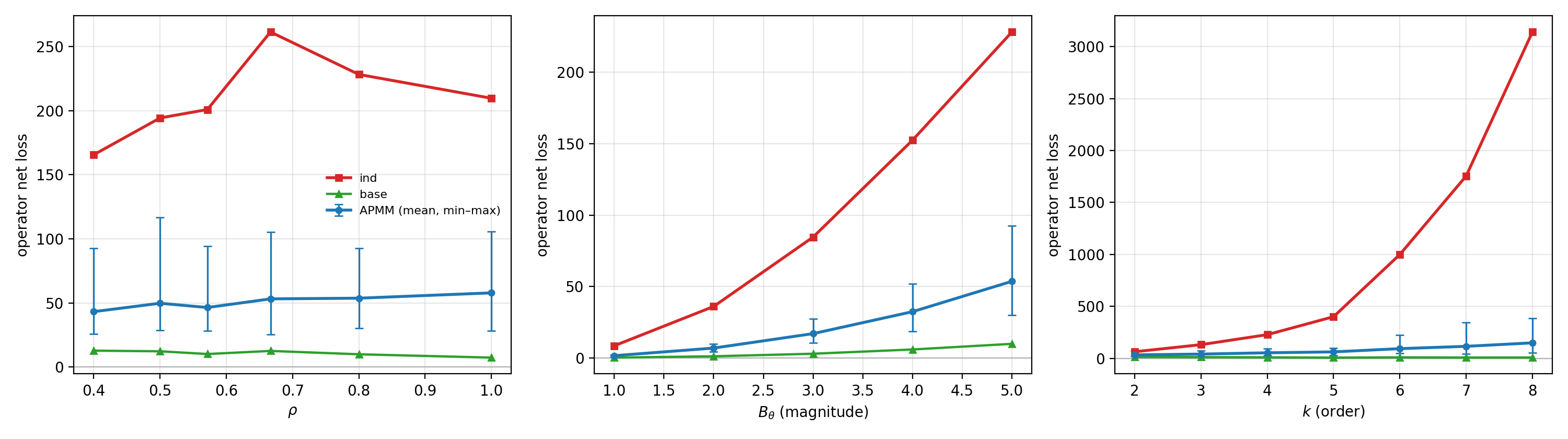}
  \caption{Robustness of loss scaling to the three flow parameters of
  Assumption~\ref{ass:low-order}, sweeping one at a time about the operating
  point ($k=6$, $\rho=0.8$, $B_\theta=5$) at fixed $M=10$ and $b=10$.}
  \label{fig:robustness}
\end{figure}

\paragraph{Robustness.} Figure~\ref{fig:robustness} sweeps each of the three low-order flow parameters
about the operating point, holding the rest fixed, repeating each point for $40$ seeds. Across all three, APMM tracks
the base-market floor closer while the independent baseline
sits far above.
The magnitude parameter $B_\theta$ drives loss upward for every
design, as it adds informed displacement the operator must finance. The
decay $\rho$ moves APMM only weakly---its mean drifts down slightly as mass
concentrates at low orders---confirming that the mechanism does not rely on a
sharp decay to stay cheap. The order cap $k$ is the sharpest separator. The
baseline explodes as $k$ grows, exceeding the APMM loss by more than an order of
magnitude at $k=8$, whereas APMM remains flat.

\begin{table}[t]
\centering
\begin{subtable}[t]{0.48\linewidth}
\centering
\caption{Paired per-trader price ratio
$q=p_{\mathrm{eff}}^{\mathrm{APMM}}/p_{\mathrm{eff}}^{\mathrm{ind}}$
by parlay order, 100 seeds.}
\label{tab:trader-price}
\begin{tabular}{rccc}
\toprule
$|S|$ & mean $q$ & $95\%$ CI & in $\pm10\%$ \\
\midrule
1     & 1.000 & $[1.000,1.000]$ & 100\% \\
2     & 0.994 & $[0.990,0.998]$ & 77\% \\
3     & 0.981 & $[0.976,0.986]$ & 62\% \\
4     & 0.971 & $[0.964,0.977]$ & 56\% \\
5     & 0.961 & $[0.954,0.969]$ & 55\% \\
6     & 0.938 & $[0.931,0.946]$ & 52\% \\
\midrule
pool  & 0.978 & $[0.976,0.980]$ & 68\% \\
\bottomrule
\end{tabular}
\end{subtable}
\hfill
\begin{subtable}[t]{0.48\linewidth}
\centering
\caption{Corpus market maker loss by parlay order $|S|$, summed over all $83$
games, for the baseline and APMM on replayed Kalshi flow.}
\label{tab:replay-order}
\begin{tabular}{rrr}
\toprule
$|S|$ & ind loss & APMM loss \\
\midrule
2 & $16{,}670.9$ & $-24.9$ \\
3 & $33{,}324.9$ & $-27.2$ \\
4 & $51{,}664.4$ & $2.9$ \\
5 & $43{,}646.1$ & $18.6$ \\
6 & $21{,}870.1$ & $1.5$ \\
7 & $6{,}130.5$ & $-2.0$ \\
8 & $743.4$ & $0.1$ \\
\midrule
all & $174{,}050$ & $-31$ \\
\bottomrule
\end{tabular}
\end{subtable}
\end{table}

\paragraph{Traders' expected gain}
Next, we confirm that traders received fair odds under APMM against the
independent-LMSR baseline pricing. We explain the experiment setup in Appendix~\ref{app:trader-setup}.
Our paired unit is the per-trader ratio
$q=p_{\mathrm{eff}}^{\mathrm{APMM}}/p_{\mathrm{eff}}^{\mathrm{ind}}$, stratified by
order $|S|$; $q\approx 1$ means the two designs price the same trade equivalently.

Table~\ref{tab:trader-price} reports $q$ by order. On average the two designs
give the informed trader the same odds. The trader's
aggregate deal is unchanged, but, the operator's resulting exposure in APMM is bounded in Theorem~\ref{thm:main}. APMM's saving is thus a redistribution of the odds with
no worse a price for the traders on average, confirming a structural efficiency.

\subsection{Loss on real order flow}
\label{sec:loss-real}
Our final experiment drops the synthetic flow model and replays the actual NBA
same-game parlay trades from the Kalshi. For each of the $83$ games with meaningful number of 
traded parlays we feed the recorded sequence of parlay trades to APMM and to the 
baseline under identical liquidity, and record the market maker's realized loss 
under real-world settlements on each design.

Table~\ref{tab:replay-order} resolves the loss by parlay order. The baseline's
loss is spread across orders and peaks at the middle orders, where the corpus has
the most crossing structure. APMM's loss
stays within a few tens of dollars at every order and is negative in aggregate. The advantage proven under the low-order
model and calibrated in simulation therefore carries over intact to historical
order flow.

The reason for negative loss (profit for the market maker) is the existence of
noise traders in the real-world data. Not every trader 
is informed of the exact latent probability. They have a noisy belief over the latent truth
where the market maker can profit off of it. This noise gets amplified with the hidden 
Kalshi markups in the price.

Figure~\ref{fig:replay-gap} plots the per-game loss gap
against the number of base markets $M$, where a positive gap means APMM loses
less on that game. The gap is positive in $64$ of $83$ games and shows no trend
with $M$. In most games the APMM's loss stays close to zero, breaking it even for the market maker. 

\section{Conclusion}
\label{sec:conclusion}
We introduced APMM, an automated market maker that offers the combinatorial
family of parlays over $M$ base events at essentially the cost of running markets
for the base events alone. Our analysis is driven by a trader-flow model we
measured directly on prediction-market data: real informed flow is concentrated in
low-leg parlays, with trade counts decaying geometrically in leg count. Taking
this observed structure as the operating regime, we proved that the market maker's
worst-case loss is bounded by $O(M^2)$, falling to $O(M)$ when the traded parlays
are also spread across events rather than concentrated on a few. Empirically, simulation confirms the scaling and the worst-case
envelope, and replaying historical Kalshi order flow shows that APMM's advantage
over independent per-parlay markets survives real, noisy trade sequences.

\paragraph{Future work.}
Assumption~\ref{ass:low-order} is grounded empirically: we measure the low-order structure directly rather than positing it. A natural next step is to derive the same structure endogenously, which would turn the assumption into a property of the system and make the guarantees self-contained. A natural extension is a fee mechanism layered
on the hierarchical parameterization, priced to preserve the current guarantees
while extracting a margin. Another important problem to tackle next is
analyzing APMM under adversarial, non-risk-neutral flow that would test the incentive
guarantees beyond the single-trader setting, for example a strategic trader 
may accept a poor parlay price to profit on a correlated position on other markets.

\bibliographystyle{splncs04}
\bibliography{references}

\appendix
\section{Proofs}

\subsection{Canonical mechanism linear loss}
\label{app:pf-lem-canon}
\canonlinloss*
\begin{proof}
We hold a single canonical block state $\{r^{(S)}\}$ and read it under two
bookkeepings. The \emph{reparametrized joint LMSR} keeps one book over the $2^M$
atoms with share vector $Q_z=\sum_{\emptyset\neq T\subseteq[M]}r^{(T)}_{\mathbf 1_T}\,z^{T}$
and cost $C_{\mathcal J}(Q)=b\log\sum_{z}e^{Q_z/b}$. The \emph{canonical mechanism}
keeps the hierarchy with aggregates $Q^{(S)}$ from \eqref{eq:Q-agg} and per-book
costs \eqref{eq:cost-design}. We show the two realize the same payout on every
atom and premiums differing by $O(M)$, then invoke
the joint book's linear worst-case loss.

\emph{Step 1: the canonical state is the truncated log-density.}
Under canonical routing \eqref{eq:canon-route} every block sits on its all-ones
pattern, $r^{(T)}_{\boldsymbol\omega_T}=b\,\theta_T(P^\star)\,\mathbf 1\{\boldsymbol\omega_T=\mathbf 1_T\}$,
so $r^{(T)}_{\boldsymbol\omega_S|_T}=b\,\theta_T(P^\star)\,\boldsymbol\omega_S^{T}$
since $\boldsymbol\omega_S|_T=\mathbf 1_T$ iff $\boldsymbol\omega_S^{T}=1$.
Substituting into the joint share vector and the aggregate \eqref{eq:Q-agg} and
using the monomial expansion \eqref{eq:canonical},
\begin{equation}
  \label{eq:canon-logdensity}
  Q_z=b\!\!\sum_{\emptyset\neq T\subseteq[M]}\!\!\theta_T(P^\star)\,z^{T}
     =b\big(\log P^\star(z)-\theta_\emptyset\big),
  \qquad
  Q^{(S)}_{\boldsymbol\omega_S}
     =b\!\!\sum_{\emptyset\neq T\subseteq S}\!\!\theta_T(P^\star)\,\boldsymbol\omega_S^{T}.
\end{equation}
Thus the joint book's shares are $b\log P^\star$ up to the constant $\theta_\emptyset$,
and each hierarchy aggregate is its $S$-truncation. Each canonical parameter
$\theta_T$ enters $Q^{(S)}$ for exactly the books $S\supseteq T$.

\emph{Step 2: payouts coincide on every atom.}
The operator's liability on book $S$ is the change in its own block between the
opening (all blocks zero) and the closing state, so on a realized atom $z^\star$
the hierarchy pays the sum of these block changes over all books, while the joint
book pays the change in its single atom share. By Step~1 each block change is
$b\,\Delta\theta_T$ on the all-ones pattern, and restricting to $z^\star$ retains
exactly the blocks $T\subseteq\mathrm{supp}(z^\star)$ (those with $\boldsymbol\omega^{T}=1$
at $z^\star$). Hence
\begin{equation}
  \label{eq:payout-coincide}
  \mathrm{payout}_{\mathrm{can}}(z^\star)
  =\!\!\sum_{\emptyset\neq S\subseteq[M]}\!\!\big(r^{(S),f}_{z^\star|_S}-r^{(S),0}_{z^\star|_S}\big)
  =\!\!\!\!\sum_{\emptyset\neq T\subseteq\mathrm{supp}(z^\star)}\!\!\!\! b\,\Delta\theta_T
  \;=\;Q^{f}_{z^\star}-Q^{0}_{z^\star}
  \;=\;\mathrm{payout}_{\mathcal J}(z^\star),
\end{equation}
the middle equality collecting, for each $T\subseteq\mathrm{supp}(z^\star)$, its
single appearance in the joint atom share and its appearances across the books
$S\supseteq T$ that resolve to it. Payouts therefore agree on every atom, so by
the loss identity \eqref{eq:loss} the realized losses differ only through
collected premium,
\begin{equation}
  \label{eq:loss-eq-prem}
  \mathcal L_{\mathrm{can}}(z^\star)-\mathcal L_{\mathcal J}(z^\star)
  =\mathrm{Prem}_{\mathcal J}-\mathrm{Prem}_{\mathrm{can}}.
\end{equation}

\emph{Step 3: per-sub-trade premium difference.}
A full-trade with target $S$ is executed on APMM as one sub-trade per block
$T\subseteq S$, applied in ascending order of $|T|$ (\S\ref{sec:trade-mechanics}).
The joint book has no such decomposition; it answers the full-trade with a single
move of its atom shares. Since its cost is a potential and revenue is path
independent \eqref{eq:cost-price}, we route the joint book's net move through the
\emph{same} ascending sequence of block increments $\{\Delta\theta_T\,z^{T}:T\subseteq S\}$
without changing its total premium, matching each canonical sub-trade on a block
$T$ with a joint sub-move applying the identical increment. Both premiums may then
be compared sub-trade by sub-trade.

Fix a sub-trade on block $T$, moving it by $b\,\Delta\theta_T$. Writing
$u(z)=\Delta\theta_T\,z^{T}$ for the increment in units of $b$, it depends on $z$
only through $\boldsymbol\omega_T=z|_T$. Write $\theta^{\mathrm{cur}}$ for the
canonical state just before the sub-trade. The joint book prices the increment in
one log-sum-exp over the $2^M$ atoms; grouping that sum by $\boldsymbol\omega_T$
collapses it to a $T$-local form against the joint's current marginal
$\tau^{\mathrm{cur}}_T$,
\begin{equation}
  \label{eq:premJ-closed}
  \mathrm{Prem}_{\mathcal J}(T)
  \;=\;
  b\log\frac{\sum_{\boldsymbol\omega_T}\Phi_T(\boldsymbol\omega_T)\,
              e^{\sum_{U\subseteq T}\theta^{\mathrm{cur}}_U\boldsymbol\omega_T^{U}}\,
              e^{\Delta\theta_T\boldsymbol\omega_T^{T}}}
             {\sum_{\boldsymbol\omega_T}\Phi_T(\boldsymbol\omega_T)\,
              e^{\sum_{U\subseteq T}\theta^{\mathrm{cur}}_U\boldsymbol\omega_T^{U}}}
  \;=\;b\log\mathbb E_{\tau^{\mathrm{cur}}_T}\!\big[e^{u}\big],
\end{equation}
where $\Phi_T(\boldsymbol\omega_T)=\sum_{\boldsymbol\omega_{\bar T}}
\exp\!\big(\sum_{U\cap T=\varnothing}\theta^{\mathrm{cur}}_U\boldsymbol\omega_{\bar T}^{U}
+\sum_{\text{crossing }U}\theta^{\mathrm{cur}}_U\boldsymbol\omega_T^{U\cap T}
\boldsymbol\omega_{\bar T}^{U\setminus T}\big)$ is the crossing-mass weight, the
disjoint part of which cancels between numerator and denominator. The canonical
mechanism prices the same increment on its own book $T$ \eqref{eq:cost-design},
carrying only the blocks $U\subseteq T$, hence against the truncated marginal
$\tau^{\mathrm{can}}_T\propto
e^{\sum_{U\subseteq T}\theta^{\mathrm{cur}}_U\boldsymbol\omega_T^{U}}$, the same
ratio with the crossing weight set to one,
\begin{equation}
  \label{eq:premA-closed}
  \mathrm{Prem}_{\mathrm{can}}(T)
  \;=\;
  b\log\frac{\sum_{\boldsymbol\omega_T}
              e^{\sum_{U\subseteq T}\theta^{\mathrm{cur}}_U\boldsymbol\omega_T^{U}}\,
              e^{\Delta\theta_T\boldsymbol\omega_T^{T}}}
             {\sum_{\boldsymbol\omega_T}
              e^{\sum_{U\subseteq T}\theta^{\mathrm{cur}}_U\boldsymbol\omega_T^{U}}}
  \;=\;b\log\mathbb E_{\tau^{\mathrm{can}}_T}\!\big[e^{u}\big].
\end{equation}
The two closed forms are the cumulant generating function of the \emph{same}
increment $u$ under two marginals. For any probability measure $\mu$ and any $u$,
$\log\mathbb E_\mu[e^{u}]\in[\min_{\boldsymbol\omega_T}u,\,\max_{\boldsymbol\omega_T}u]$,
so both premiums lie in the common interval $[\,b\min u,\,b\max u\,]$ regardless of
the marginal, and their difference is at most its width. Since
$u=\Delta\theta_T\,\boldsymbol\omega_T^{T}$ takes only the two values $0$ and
$\Delta\theta_T$, that width is $|\Delta\theta_T|$, giving
\begin{equation}
  \label{eq:per-trade-gap}
  \big|\mathrm{Prem}_{\mathcal J}(T)-\mathrm{Prem}_{\mathrm{can}}(T)\big|
  \;\le\;b\big(\max u-\min u\big)
  \;=\;b\,\big|\Delta\theta_T\big| .
\end{equation}

\emph{Step 4: summation over the flow.}
Every full-trade has order $|S|\le k$ and fires at most $2^{|S|}\le 2^k$
sub-trades, one per $T\subseteq S$. Grouping the per-sub-trade bound
\eqref{eq:per-trade-gap} by full-trade and bounding each sub-trade increment
by $|\Delta\theta_T|\le B_\theta$ 
\eqref{eq:low-order-quant},
\begin{equation}
  \label{eq:prem-total-gap}
  \big|\mathrm{Prem}_{\mathcal J}-\mathrm{Prem}_{\mathrm{can}}\big|
  \;\le\;\sum_{\text{sub-trades}} b\,\big|\Delta\theta_T\big|
  \;\le\;b\,2^{k}\,B_\theta\sum_{j=1}^{k} N_j,
\end{equation}
where $N_j$ is the number of order-$j$ full-trades. Under the geometric decay
$N_j\le\gamma M\rho^{\,j-1}$ of Assumption~\ref{ass:low-order}, the trades sum
to $\sum_j N_j\le\gamma M/(1-\rho)=O(M)$, so with $k=O(1)$ the premium
gap is $b\,2^{k}\,B_\theta\cdot O(M)=O(M)$. With the payout coincidence
\eqref{eq:loss-eq-prem} this caps the loss gap on every realized atom,
$\big|\mathcal L_{\mathrm{can}}(z^\star)-\mathcal L_{\mathcal J}(z^\star)\big|
=O(M)$.

\emph{Step 5: the joint loss is worst-case linear.}
The joint book is a single LMSR over $2^M$ atoms opened at a product state, so its
realized loss on any atom is at most its liquidity times the log outcome count,
$\mathcal L_{\mathcal J}(z^\star)\le b\log 2^M=bM\log2$
\cite{hanson2003combinatorial,hanson2007lmsr}. Combining with the gap bound,
\begin{equation}
  \label{eq:canon-assemble}
  \max_{z^\star}\mathcal L_{\mathrm{can}}(z^\star)
  \;\le\;\max_{z^\star}\mathcal L_{\mathcal J}(z^\star)
        +\big|\mathrm{Prem}_{\mathcal J}-\mathrm{Prem}_{\mathrm{can}}\big|
  \;\le\;bM\log2+b\,2^{k}\,B_\theta\sum_{j=1}^{k}N_j,
\end{equation}
which is \eqref{eq:canon-worstcase}.  Since $\sum_j N_j=O(M)$ under the geometric
decay of Assumption~\ref{ass:low-order}, the right-hand side is $O(M)$.
\end{proof}

\subsection{Price to loss proof}
\label{app:pf-lem-pricetoloss}
\pricetoloss*
\begin{proof}
By \eqref{eq:loss} the realized loss of an LMSR opening at $\pi_0$ and closing at
$\pi$ is $L(z)=b\log(\pi(z)/\pi_0(z))$, an identity holding outcome by outcome.
Differencing the two markets gives \eqref{eq:price-to-loss}; the opening terms
cancel exactly when the openings agree. The two-sided bound follows from the
triangle inequality and the two sup-norm hypotheses. Under shared openings,
taking the $\mu$-expectation of \eqref{eq:price-to-loss} and rewriting through
the definition of relative entropy gives \eqref{eq:price-to-loss-kl}.
\end{proof}

\subsection{Gap is crossing mass}
\label{app:pf-lem-gap}
\gapcrossingmass*
\begin{proof}
Identical to the terminal case: expand $\log P^t$ in the canonical basis, split
leg-sets into internal, disjoint, and crossing; marginalizing $\bar S$ factors
the internal mass, leaving the crossing weight $\Phi_S(P^t)$, whose oscillation
in $\boldsymbol\omega_S$ is at most
$\sum_{\text{crossing }T}|\theta_T(P^t)|$. Every crossing $T$ has $|T|\ge2$
(a singleton is internal or disjoint), and the opening is a product state, so
$\theta_T(\pi^0)=0$ and $\theta_T(P^t)=\sum_{r<t}\Delta\theta_T^{(r)}$
telescopes over the flow so far; bounding each increment by $B_\theta$
\eqref{eq:low-order-quant} and counting at most $2^{k}$ sub-trade increments per
full-trade over $\sum_j N_j$ full-trades gives the right-hand bound.
\end{proof}

\subsection{Uniform price gap between routings}
\label{app:pf-lem-uniformgap}
\uniformgap*
\begin{proof}
Write $\mathrm{Mass}=2^{k}B_\theta\sum_j N_j$ for the total traded canonical
mass, and $\mathrm{osc}(f)=\max f-\min f$ over $\Omega_S$. Three ingredients.
 
\emph{(i) Softmax transfer.} If two aggregates on the same book differ by
$D(\boldsymbol\omega_S)$, then their normalized log-prices satisfy
$\|\log\pi_1-\log\pi_2\|_\infty\le\mathrm{osc}(D)/b$: indeed
$\log\pi_1-\log\pi_2=D/b-\log(Z_1/Z_2)$ and
$\log(Z_1/Z_2)\in[\min D,\max D]/b$ by the interval bound on
$\log\mathbb E[e^{D/b}]$.
 
\emph{(ii) Canonical staleness.} At any moment each canonical block $U$ carries
$b\,\theta_U(P^{t_U})$, where $t_U$ is the last round whose sweep touched $U$.
Against the fresh reference $b\sum_{U\subseteq S}\theta_U(P^{t})\boldsymbol\omega^U$
the aggregate deviation has oscillation at most
$\sum_{U\subseteq S}\big|\theta_U(P^{t})-\theta_U(P^{t_U})\big|
\le\sum_{U\subseteq S}\sum_r|\Delta\theta_U^{(r)}|\le\mathrm{Mass}$,
since the per-block movements over the whole flow sum to the total traded mass.
By (i), $\|\log\pi^{\mathrm{can}}_S-\log\nu^{\mathrm{can}}_S(P^{t})\|_\infty
\le\mathrm{Mass}$.
 
\emph{(iii) APMM staleness, by induction on $|S|$.} Let
$\delta_S=\mathrm{osc}\big(Q^{\mathrm{APMM}}_S-b\log\tau_S(P^{t})\big)$ at the
current moment. The marginal-drift bound
\[
  \mathrm{osc}\big(\log\tau_S(P)-\log\tau_S(P')\big)
  \;\le\;\sum_{U\subseteq S}\big|\theta_U(P)-\theta_U(P')\big|
  +\mathrm{osc}\,\log\Phi_S(P)+\mathrm{osc}\,\log\Phi_S(P')
  \;\le\;3\,\mathrm{Mass}
\]
follows from the factorization identity of Lemma~\ref{lem:gap} applied to $P$
and $P'$. For $|S|=1$ the block equals its aggregate and was set to
$b\log\tau_S(P^{t_S})$ at its last touch, so
$\delta_S\le b\cdot3\,\mathrm{Mass}$ by the drift bound. For $|S|>1$, M\"obius
inversion of the aggregate identity gives
$r^{(U)}=\sum_{V\subseteq U}(-1)^{|U\setminus V|}Q_V$, so the lower aggregate
$\sum_{U\subsetneq S}r^{(U)}$ is a fixed $\pm1$ combination of at most $2^{k}$
book aggregates of lower order; each $Q_V$ deviates from
$b\log\tau_V(P^{t})$ by at most $\delta_V$, and each fresh reference drifts by
at most $3b\,\mathrm{Mass}$ between touch times. Since $r^{(S)}$ was set at
round $t_S$ to place $Q_S$ exactly at $b\log\tau_S(P^{t_S})$, collecting terms
yields the recursion
$\delta_S\le 3b\,\mathrm{Mass}+2^{k}\max_{V\subsetneq S}
\big(\delta_V+3b\,\mathrm{Mass}\big)$, whence
$\delta_S\le c_{|S|}\,b\,\mathrm{Mass}$ with $c_j$ depending only on $j\le k$.
By (i), $\|\log\pi^{\mathrm{APMM}}_S-\log\tau_S(P^{t})\|_\infty\le
c_k\,\mathrm{Mass}$.
 
Combining by the triangle inequality through the fresh references,
\[
  \big\|\log\pi^{\mathrm{APMM}}_S-\log\pi^{\mathrm{can}}_S\big\|_\infty
  \;\le\;c_k\,\mathrm{Mass}
  \;+\;\big\|\log\tau_S(P^{t})-\log\nu^{\mathrm{can}}_S(P^{t})\big\|_\infty
  \;+\;\mathrm{Mass}
  \;\le\;(c_k+2)\,\mathrm{Mass},
\]
the middle term by Lemma~\ref{lem:gap}. Set $C_k=c_k+2$. Books outside the
active set are touched by neither routing and quote the shared opening, so
their gap is zero. The constants $c_j$ are not optimized.
\end{proof}

\subsection{Quadratic worst-case loss}
\label{app:quad-loss}
\quadloss*
\begin{proof}
Both routings execute the same sweeps and hence realize the same sub-trades on
the same books (\S\ref{sec:canon}); we compare them sub-trade by sub-trade.
Under the own-block accounting, a book accrues loss only at its own sub-trades:
between them its price moves premium-free through shared lower blocks, in both
mechanisms, so the book is not an end-to-end LMSR. Within a single sub-trade
only the book's own block moves, so \eqref{eq:loss} applies to the segment: the
segment loss on realized $z^\star$ is
$b\log\big(\pi_{\mathrm{after}}(z^\star|_S)/\pi_{\mathrm{before}}(z^\star|_S)\big)$,
and the book's loss is the sum of its segment losses. Summing over books, the
mechanism's loss is the sum over all sub-trades of segment losses.
 
Fix a matched pair of sub-trades on book $S$ at round $t$. By
Lemma~\ref{lem:price-to-loss} applied to the segment, the loss difference on
every outcome is at most
$b(\varepsilon_{\mathrm{start}}+\varepsilon_{\mathrm{end}})$, where
$\varepsilon_{\mathrm{start}}$ and $\varepsilon_{\mathrm{end}}$ are the
sup-norm log-price gaps between the two routings immediately before and after
the sub-trade. Lemma~\ref{lem:uniform-gap} bounds
$\varepsilon_{\mathrm{start}}=O(M)$, and Lemma~\ref{lem:gap} bounds
$\varepsilon_{\mathrm{end}}=O(M)$ since after the sub-trade APMM quotes
$\tau_S(P^t)$ and the canonical mechanism quotes $\nu^{\mathrm{can}}_S(P^t)$.
The number of sub-trades is at most $2^{k}\sum_j N_j=O(M)$ by
Assumption~\ref{ass:low-order}, so
$\big|\mathcal L_{\mathrm{APMM}}(z^\star)-\mathcal L_{\mathrm{can}}(z^\star)\big|
\le O(M)\cdot b\cdot O(M)=O(M^2)$ on every outcome. Adding
$\max_{z^\star}\mathcal L_{\mathrm{can}}(z^\star)=O(M)$ from
Lemma~\ref{lem:canon} gives \eqref{eq:main}.
\end{proof}

\section{Supporting data for low order assumption}
\label{app:supp-data}
We collect 12 hours of trading data over all parlay markets across all Kalshi
markets including politics, sports, weather, etc. and break it down by leg count. 
It is easy to see the low order assumption holds in the aggregate data on 
Table~\ref{tab:flow_agg} with $R<10$, $\rho < 0.8$, and $k=13$ as the 
effective $k$ containing more than $95\%$ of the trades. The $99\%$ bound 
for trade count sets $k=19$. Here we have 
the number of base market $M=5{,}251$.

\begin{table}[t]
\centering
\caption{All traded parlays over a 12 hour span on Kalshi.}
\label{tab:flow_agg}
\begin{tabular}{rrr}
\toprule
legs & market count & total trade count \\
\midrule
2     & 13{,}637 & 126{,}381 \\
3     & 24{,}088 &  92{,}702 \\
4     & 25{,}747 &  70{,}716 \\
5     & 19{,}930 &  35{,}695 \\
6     & 15{,}936 &  26{,}705 \\
7     & 11{,}235 &  17{,}202 \\
8     &  9{,}311 &  13{,}952 \\
9     &  6{,}454 &   9{,}121 \\
10    &  5{,}746 &   8{,}866 \\
11    &  4{,}137 &   9{,}422 \\
12    &  3{,}300 &   5{,}027 \\
13    &  2{,}439 &   3{,}743 \\
14    &  2{,}083 &   2{,}948 \\
15    &  2{,}122 &   6{,}610 \\
16    &  1{,}102 &   1{,}588 \\
17    &    957 &   1{,}324 \\
18    &    760 &   1{,}033 \\
19    &    546 &     700 \\
20    &    575 &     912 \\
21    &    422 &     573 \\
$22+$ &  1{,}905 &   2{,}795 \\
\midrule
all   & 152{,}432 & 438{,}015 \\
\bottomrule
\end{tabular}
\end{table}

\section{Trader experiment setup}
\label{app:trader-setup}
We hold the environment fixed and compare, trade for
trade, the deal an informed trader realizes under both designs. 
We instantiate a low-order trade flow using same parameters as loss experiment, but, setting
$M=15$ base events and $k=6$ with, averaged over $100$ latent-belief seeds. Both
designs see the identical flow. For each
trader under each design we record the premium paid and the contracts
acquired on every touched book, and score them against the terminal joint to obtain an expected payout $\bar v$. The trader's effective price is $p_{\mathrm{eff}}=\bar v/\pi$

\end{document}